\documentclass[12pt,letterpaper]{article}

\usepackage[left=1in,right=1in,top=1in,bottom=1in]{geometry}
\usepackage{setspace}

\usepackage{amsmath,amssymb,amsthm,mathtools,bm}

\usepackage{graphicx}
\usepackage{adjustbox}
\usepackage{booktabs}
\usepackage{threeparttable}
\usepackage{longtable}
\usepackage{array}
\usepackage{multirow}
\usepackage{caption}
\usepackage{subcaption}
\usepackage{float}
\usepackage{placeins}
\usepackage{makecell}

\usepackage{enumitem}
\usepackage{xcolor}
\usepackage{appendix}
\usepackage{pdflscape}
\usepackage{csquotes}
\usepackage{bbm}

\usepackage{chngcntr}
\counterwithin{figure}{section}
\counterwithin{table}{section}

\usepackage[authoryear]{natbib}

\usepackage[colorlinks=true,linkcolor=blue,citecolor=blue,urlcolor=blue]{hyperref}
\usepackage[nameinlink,capitalize]{cleveref}

\usepackage{microtype}                       

\newtheorem{proposition}{Proposition}
\newtheorem{assumption}{Assumption}

\theoremstyle{definition}

\title{A Cap--Axis Integral Diagnostic of Factor Models}

\author{Useong Shin\thanks{
		Sogang Business School, Sogang University (Seoul, Korea).\\
		ORCID: \href{https://orcid.org/0009-0003-0197-9003}{0009-0003-0197-9003}\\
		Email: \texttt{useong@sogang.ac.kr}
}}
\date{\today}

\begin{document}
	
	\maketitle
	\thispagestyle{empty}
	
	\begin{flushleft}
		\textbf{\small JEL:} G12; G11; C52; C58\\
		\textbf{\small Keywords:} factor models; asset pricing tests; stochastic discount factor; 
		market capitalization; pricing errors; model evaluation
	\end{flushleft}
	
	
	\begin{abstract}
		I propose a cap-axis zero-alpha diagnostic for factor-model evaluation. Whole-stock capitalization prefixes are paired with equal realized exposure to the aggregate market, producing a bridge-alpha curve that localizes pricing errors within the market. Finite-grid HAC-Gaussian inference and residual-block calibration provide size-controlled functional tests. In 1967--2024 CRSP data, q5's negative daily bridge attenuates under lead--lag correction and is small monthly, whereas Fama--French and Carhart bridges become more visible monthly. Across 155 factors, cap-axis magnitude is neither a monotone transformation of maximum-Sharpe gain nor explained by exposure to FF3 SMB.
	\end{abstract}
	
	\pagenumbering{arabic}
	
	\newpage
	\section{Introduction}
	\label{sec:intro}
	
	Factor models are commonly evaluated by whether they expand the attainable mean--variance frontier and whether they leave zero pricing errors on chosen test assets. These criteria are closely connected for an exact linear pricing model but can diverge for low-dimensional approximations, where a factor can improve the maximum Sharpe ratio while leaving systematic alphas on an economically meaningful return space.
	
	I make this separation measurable inside the market portfolio. The proposed \emph{cap-axis zero-alpha diagnostic} orders stocks by market capitalization and, at each cumulative-cap target \(p\), pairs a whole-stock prefix with an equal realized exposure to the aggregate market. The resulting zero-investment bridge \(D_t(p)\) closes at \(p=0\) and \(p=1\); regressing it on model \(m\) as \(p\) varies produces the bridge-alpha curve \(\alpha_m(p)\).
	
	The population null is zero alpha on every implemented bridge. With a zero aggregate-market alpha, linearity extends this restriction to the finite return space generated by the market and the bridge family (\Cref{prop:cap_axis_sufficiency}). Because whole-stock prefix exposure varies over time, this is not equivalent to pricing every raw size portfolio and does not establish full SDF validity: it establishes zero-alpha consistency on a fixed family of equal-exposure prefix-versus-market trades.
	
	I summarize the curve using signed area \(SA\), integrated absolute error \(IAE\), integrated squared error \(ISE\), and maximum absolute error \(SUP\). A rank-area portfolio provides a scalar directional counterpart to \(SA\), while the nonlinear functionals are tested using a finite-grid HAC-Gaussian approximation. Residual-block calibration produces near-nominal rejection under the zero-curve null and increasing power against structured alternatives.
	
	In the 1967--2024 CRSP market, all six candidate models closely price the aggregate market but leave different cap-axis footprints. q5 has a one-sided negative daily bridge that attenuates under lead--lag correction and is small monthly, whereas the Fama--French and Carhart models leave positive, nearly one-signed monthly curves. The diagnostic therefore identifies model-specific and horizon-dependent pricing errors rather than producing a universal ranking.
	
	Finally, I apply the diagnostic to 155 monthly factors added individually to the market (\Cref{sec:coordinate}). Cap-axis magnitude is neither a monotone transformation of maximum-Sharpe gain nor explained by exposure to FF3 SMB. The diagnostic thus complements joint alpha and spanning tests by recovering the direction, magnitude, location, and frequency of pricing errors along a predetermined market-internal coordinate.
	
	\section{Theoretical Background and Related Literature}
	\label{sec:lit}
	
	\paragraph{Model comparison criteria.}
	The expansion of multifactor models has made model comparison a central issue in empirical asset pricing. Standard candidates include the Fama--French models \citep{FF92,FF93,FF15}, momentum \citep{JT93,Carhart97}, and the investment-and-profitability \(q\)-factor family and its anomaly-replication applications \citep{HXZ15,HMXZ19,HXZ20,HMXZ21,HMXZ24}. A leading comparison criterion is factor spanning and the associated maximum-Sharpe-ratio frontier \citep{BS17,BS18,FF18}: a model is evaluated by whether its factor span improves the attainable mean--variance opportunity set.
	
	This paper studies a complementary object rather than replacing that criterion. A factor span can expand the mean--variance frontier while leaving pricing errors on a fixed economic subspace; the cap-axis diagnostic measures such errors on one subspace, the market portfolio's internal capitalization-rank axis.
	
	\paragraph{Test-asset dependence.}
	Joint alpha tests and distance measures depend on the chosen test assets \citep{GRS89,HJ97,Cochrane05}, so test-portfolio construction can affect inference and rankings \citep{LNS10}, especially with weak factors, redundant factors, or repeated model search \citep{GXZ25,KNS18,LM90}. Maximum-Sharpe comparisons reduce this dependence by focusing on the factor span. The cap-axis diagnostic takes the opposite route: it keeps the zero-alpha question but fixes the test-asset coordinate. The test assets are zero-investment bridges comparing each cumulative capitalization-rank prefix with an equal-exposure aggregate market, so the coordinate is generated by the market portfolio's own capitalization weights, not by an anomaly sort chosen ex post.
	
	\paragraph{Characteristic-sorted portfolios as functions.}
	The diagnostic is closest in spirit to work that treats characteristic-sorted portfolio returns as functions over a characteristic domain. \citet{CCFS20} develop estimation and inference for characteristic-sorted portfolios; I apply this functional view to market-capitalization rank and to a specific object, the bridge-alpha curve. Instead of testing only a finite set of pointwise alphas, I summarize the curve by a signed area and by \(IAE\), \(ISE\), and \(SUP\) norms, which separate direction, total magnitude, and local extremity of pricing errors along the cap-rank coordinate.
	
	\paragraph{Factor coordinates.}
	The factor-coordinate exercise uses the replicated anomaly factors of \citet{JKP23}, distributed through Global Factor Data \citep{GFD}, augmented with the Fama--French/Carhart and \(q\)-factor series used elsewhere in the paper. This broad factor set allows the cap-axis footprint to be compared with familiar coordinates such as maximum-Sharpe gain and size exposure. The exercise is descriptive: it asks whether the cap-axis norm is a restatement of existing factor coordinates or a distinct feature of factor behavior.
	
	\paragraph{Frequency and non-synchronous trading.}
	Daily factor regressions can be affected by non-synchronous trading: if small or illiquid stocks react to market-wide information with a lag, daily beta estimates can be attenuated and intercepts distorted. The lead--lag corrections of \citet{SW77} and \citet{Dimson79} address this by aggregating lagged and leading factor loadings, and I apply the same logic to the cap-axis bridge. Varying the lead--lag window produces a horizon profile of the signed-area alpha and helps distinguish short-horizon attenuation from lower-frequency pricing-error structure.
	
	\section{The Cap--Axis Zero-Alpha Diagnostic}
	\label{sec:cap_axis_integral_definition}
	
	This section defines the cap-axis diagnostic as a family of implementable zero-investment returns. The empirical object is finite-dimensional: stocks are ordered by market capitalization at a formation date, their portfolio weights evolve between formation dates, and each point on the axis compares a whole-stock prefix with an equal-exposure position in the aggregate market. A continuous cap-share representation aids interpretation but is not the object used to construct the test returns.
	
	The diagnostic is deliberately restricted: it does not test whether a factor model prices every asset or defines a valid stochastic discount factor on the full return space, but whether the model leaves zero-alpha violations on the bridge-return family generated by moving through the market portfolio in capitalization order.
	
	\subsection{The implemented finite-market bridge}
	\label{subsec:discrete_integral_definition}
	
	Let \(\tau(t)\) denote the most recent formation date before return date \(t\). At \(\tau(t)\), eligible stocks are sorted in descending order of formation-date market capitalization, and the ordering is held fixed until the next formation date. Index the stocks in this order by \(i=1,\ldots,N_t\).
	
	Let \(H_{i,t|\tau(t)}\) be the beginning-of-period value of stock \(i\)'s position under the formation-and-holding rule. Define its normalized market weight by
	\begin{equation}
		w_{i,t}
		=
		\frac{H_{i,t|\tau(t)}}{\sum_{j=1}^{N_t}H_{j,t|\tau(t)}},
		\qquad
		\sum_{i=1}^{N_t}w_{i,t}=1.
		\label{eq:implemented_market_weight}
	\end{equation}
	The weights need not equal their formation-date values: under a buy-and-hold design they evolve with the underlying positions even though the capitalization order remains fixed. The details of this holding-value evolution are specified in \Cref{sec:data}.
	
	Let \(R_{i,t}\) be the stock's total return. The return on the corresponding aggregate-market portfolio is
	\begin{equation}
		R^M_t
		=
		\sum_{i=1}^{N_t}w_{i,t}R_{i,t}.
		\label{eq:discrete_market_return}
	\end{equation}
	
	For the ordered portfolio, define the cumulative weight after stock \(k\) as
	\begin{equation}
		c_{k,t}
		=
		\sum_{i=1}^{k}w_{i,t},
		\qquad
		c_{0,t}=0.
		\label{eq:discrete_cumulative_share}
	\end{equation}
	For a target cap-share cutoff \(p\in[0,1]\), the implementation does not split the boundary stock but includes the smallest whole-stock prefix whose cumulative weight reaches \(p\):
	\begin{equation}
		k_t(p)
		=
		\begin{cases}
			0, & p=0,\\[2pt]
			\min\{k:c_{k,t}\geq p\}, & p\in(0,1],
		\end{cases}
		\qquad
		s_t(p)
		=
		c_{k_t(p),t}.
		\label{eq:discrete_cut_and_share}
	\end{equation}
	Here \(p\) is the target grid location and \(s_t(p)\) the realized market exposure of the whole-stock prefix; the two generally differ because the boundary stock is not fractionally assigned.
	
	The return contribution of the prefix is
	\begin{equation}
		C_t(p)
		=
		\sum_{i=1}^{k_t(p)}w_{i,t}R_{i,t}.
		\label{eq:discrete_prefix_contribution}
	\end{equation}
	The cap-axis bridge return is
	\begin{equation}
		D_t(p)
		=
		C_t(p)-s_t(p)R^M_t.
		\label{eq:discrete_bridge_return}
	\end{equation}
	This return longs the whole-stock prefix and shorts the aggregate market at the same realized exposure \(s_t(p)\); it is therefore zero-investment and requires no separate risk-free-rate subtraction.
	
	The bridge closes exactly at both endpoints:
	\begin{equation}
		D_t(0)=D_t(1)=0.
		\label{eq:bridge_endpoint_condition}
	\end{equation}
	To see the body--tail relation, define the complementary tail contribution \(C_t^T(p)=R^M_t-C_t(p)\). Its equal-exposure bridge is
	\begin{equation}
		D_t^T(p)
		=
		C_t^T(p)-\{1-s_t(p)\}R^M_t
		=
		-D_t(p).
		\label{eq:continuous_tail_bridge_identity}
	\end{equation}
	Thus one prefix bridge records the body--tail offset at every implemented cutoff, an identity that is exact despite whole-stock assignment and time variation in the realized prefix share.
	
	For factor model \(m\), let \(f_{m,t}\) denote its vector of factor returns. At each grid point \(p_l\), estimate
	\begin{equation}
		D_t(p_l)
		=
		\alpha_m(p_l)
		+
		\beta_m(p_l)'f_{m,t}
		+
		\varepsilon_{m,t}(p_l).
		\label{eq:discrete_bridge_regression}
	\end{equation}
	The vector
	\begin{equation}
		\boldsymbol{\alpha}_m
		=
		\left(
		\alpha_m(p_1),\ldots,\alpha_m(p_L)
		\right)'
		\label{eq:grid_alpha_vector}
	\end{equation}
	is the model's cap-axis bridge-alpha path on the implemented grid. The cap-axis zero-alpha null is
	\begin{equation}
		H^{cap}_{0,m}:
		\boldsymbol{\alpha}_m=\boldsymbol{0}.
		\label{eq:pointwise_bridge_null}
	\end{equation}
	
	\subsection{A fractional benchmark and the rank-area identity}
	\label{subsec:continuous_integral_definition}
	\label{subsec:rank_area_identity}
	
	A fractional-boundary representation provides a useful interpretation of the finite bridge. On date \(t\), assign stock \(i\) to the cap-share interval \((c_{i-1,t},c_{i,t}]\) and define the step return function \(r_t(u)=R_{i,t}\) for \(u\in(c_{i-1,t},c_{i,t}]\). The coordinate \(u\in[0,1]\) measures current market exposure while preserving the capitalization order fixed at \(\tau(t)\), so the aggregate-market return can be written as
	\begin{equation}
		R^M_t
		=
		\int_0^1 r_t(u)\,du.
		\label{eq:continuous_market_return}
	\end{equation}
	
	If the boundary stock could be split fractionally, the prefix contribution at exactly \(p\) would be
	\begin{equation}
		C_t^\ast(p)
		=
		\int_0^p r_t(u)\,du,
		\label{eq:continuous_prefix_contribution}
	\end{equation}
	and the corresponding bridge would be
	\begin{equation}
		D_t^\ast(p)
		=
		C_t^\ast(p)-pR^M_t.
		\label{eq:continuous_bridge_return}
	\end{equation}
	Unlike the whole-stock implementation, this benchmark has the constant exposure \(p\) at every date. Integrating the fractional bridge across the axis gives
	\begin{align}
		\int_0^1D_t^\ast(p)\,dp
		&=
		\int_0^1
		\left\{
		\int_0^p r_t(u)\,du-pR^M_t
		\right\}dp \nonumber\\
		&=
		\int_0^1
		\left(\frac{1}{2}-u\right)r_t(u)\,du.
		\label{eq:rank_area_continuous_identity}
	\end{align}
	
	For the finite portfolio, define stock \(i\)'s midpoint cap share as \(\mu_{i,t}=c_{i,t}-\tfrac{1}{2}w_{i,t}\). The rank-area return is
	\begin{equation}
		Y_t
		=
		\sum_{i=1}^{N_t}
		w_{i,t}R_{i,t}
		\left(
		\frac{1}{2}-\mu_{i,t}
		\right).
		\label{eq:discrete_rank_area_return}
	\end{equation}
	Because \(\sum_{i=1}^{N_t}w_{i,t}\mu_{i,t}=\tfrac{1}{2}\), the portfolio weights in \(Y_t\) sum to zero, so the rank-area return is also zero-investment.
	
	For the fractional-boundary benchmark, the step-function construction implies the exact identity
	\begin{equation}
		Y_t
		=
		\int_0^1D_t^\ast(p)\,dp.
		\label{eq:rank_area_equals_bridge_area}
	\end{equation}
	Since the OLS intercept is linear in the dependent return,
	\begin{equation}
		\alpha_m(Y)
		=
		\int_0^1\alpha_m^\ast(p)\,dp.
		\label{eq:rank_alpha_equals_signed_area}
	\end{equation}
	The equality is exact for the fractional bridge but not for the implemented whole-stock grid: the empirical bridge uses the realized share \(s_t(p)\) and does not split the boundary stock, whereas \(Y_t\) uses each stock's midpoint share. The alpha of \(Y_t\) is therefore a scalar counterpart to, rather than an algebraically identical implementation of, the grid-integrated signed area. I report the difference between the two as a boundary-quantization audit.
	
	\subsection{Functional summaries of the bridge-alpha path}
	\label{subsec:bridge_functionals}
	
	Let \(\mathcal{P}=\{p_1,\ldots,p_L\}\) with \(0=p_1<\cdots<p_L=1\) denote the empirical grid, and let \(a_l\) be its trapezoidal integration weights, satisfying \(a_l\geq0\) and \(\sum_l a_l=1\). I summarize the grid alpha path with four functionals:
	\begin{align}
		SA_m
		&=
		\sum_{l=1}^{L}a_l\alpha_m(p_l),
		&
		IAE_m
		&=
		\sum_{l=1}^{L}a_l
		\left|
		\alpha_m(p_l)
		\right|,
		\label{eq:area_abs_continuous}\\
		ISE_m
		&=
		\sum_{l=1}^{L}a_l
		\alpha_m(p_l)^2,
		&
		SUP_m
		&=
		\max_{1\leq l\leq L}
		\left|
		\alpha_m(p_l)
		\right|.
		\label{eq:square_sup_continuous}
	\end{align}
	Their sample counterparts replace \(\alpha_m(p_l)\) with \(\widehat{\alpha}_m(p_l)\).
	
	The signed area \(SA_m\) measures directional tilt: a positive value indicates that the prefix tends to outperform its equal-exposure market benchmark as the axis is traversed, and a negative value the reverse. The magnitude measures \(IAE_m\) and \(ISE_m\) do not permit positive and negative regions to cancel, and \(SUP_m\) records the largest local absolute pricing error on the grid.
	
	I also report the coherence ratio
	\begin{equation}
		CR_m
		=
		\frac{|SA_m|}{IAE_m},
		\qquad
		0\leq CR_m\leq1,
		\label{eq:coherence_ratio}
	\end{equation}
	when \(IAE_m>0\). A ratio near one indicates a predominantly one-signed bridge-alpha curve; a smaller ratio indicates that positive and negative portions offset in the signed integral.
	
	The grid is the maintained test-asset coordinate, so the reported functionals are finite-grid statistics rather than claims about an unobserved continuum. The continuous notation provides an economic interpretation of their limiting object, while the empirical test is conducted on the same \(L\)-point grid used to construct the bridge returns.
	
	\subsection{Linear closure of the zero-alpha restrictions}
	\label{subsec:restricted_sdf_sufficiency}
	
	The diagnostic has a restricted linear interpretation. Let \(R_M^e\) denote the excess return on the implemented aggregate market, and define
	\begin{equation}
		\mathcal{V}_{cap,\mathcal{P}}
		=
		\operatorname{span}
		\left\{
		R_M^e,
		D(p_1),\ldots,D(p_L)
		\right\},
		\label{eq:cap_axis_span}
	\end{equation}
	the finite-dimensional return space generated by the aggregate market and the implemented cap-axis bridges.
	
	\begin{assumption}[Second moments]
		\label{asm:second_moments}
		All test returns and factor returns have finite second moments, and
		\(\operatorname{Var}(f_m)\) is nonsingular.
	\end{assumption}
	
	For any square-integrable return \(R\), the population OLS intercept relative to model \(m\) is
	\begin{equation}
		\alpha_m(R)
		=
		\mathbb{E}[R]
		-
		\mathbb{E}[f_m]'
		\operatorname{Var}(f_m)^{-1}
		\operatorname{Cov}(f_m,R).
		\label{eq:alpha_functional_explicit}
	\end{equation}
	Under \Cref{asm:second_moments}, \(\alpha_m(\cdot)\) is a linear functional of the test return.
	
	\begin{proposition}[Linear closure of cap-axis zero-alpha restrictions]
		\label{prop:cap_axis_sufficiency}
		Under \Cref{asm:second_moments}, model \(m\) has zero alpha on every return in \(\mathcal{V}_{cap,\mathcal{P}}\) if and only if
		\begin{equation}
			\alpha_m(R_M^e)=0
			\quad\text{and}\quad
			\alpha_m\{D(p_l)\}=0
			\quad
			\text{for }l=1,\ldots,L.
			\label{eq:cap_axis_sufficiency_condition}
		\end{equation}
	\end{proposition}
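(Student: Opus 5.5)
The argument rests on the linearity of the population intercept functional \eqref{eq:alpha_functional_explicit}. First I will verify that, under \Cref{asm:second_moments}, the map \(R\mapsto\alpha_m(R)\) is well defined and linear on the space of square-integrable returns. The expectation \(\mathbb{E}[R]\) is linear in \(R\), and so is \(\operatorname{Cov}(f_m,R)\). The matrix \(\mathbb{E}[f_m]'\operatorname{Var}(f_m)^{-1}\) is a fixed row vector that does not depend on \(R\); it exists because \(\operatorname{Var}(f_m)\) is nonsingular. Finite second moments guarantee that every covariance is finite. Hence \(\alpha_m(aR_1+bR_2)=a\,\alpha_m(R_1)+b\,\alpha_m(R_2)\) for all real \(a,b\).

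For the ``only if'' direction, I note that \(R_M^e\) and each \(D(p_l)\) belong to \(\mathcal{V}_{cap,\mathcal{P}}\) by the definition \eqref{eq:cap_axis_span}. Zero alpha on every element of the span therefore implies \eqref{eq:cap_axis_sufficiency_condition} immediately.

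For the ``if'' direction, I take an arbitrary \(R\in\mathcal{V}_{cap,\mathcal{P}}\). Because the span is finite-dimensional and built from finitely many generators, \(R\) can be written as \(R=c_0R_M^e+\sum_{l=1}^{L}c_lD(p_l)\) for some real coefficients \(c_0,\dots,c_L\). No closure or limit argument is needed. Each generator has finite second moments, so \(R\) does as well, and \(\alpha_m(R)\) is defined. Linearity then gives \(\alpha_m(R)=c_0\alpha_m(R_M^e)+\sum_l c_l\alpha_m\{D(p_l)\}\), which is zero under \eqref{eq:cap_axis_sufficiency_condition}.

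The only step that requires care is confirming that the span contains no limits of combinations, which would require continuity of \(\alpha_m\) in \(L^2\). Since \(\mathcal{V}_{cap,\mathcal{P}}\) is a finite span, this issue does not arise. Continuity would in any case follow from the Cauchy--Schwarz bound on the covariance. I will also remark that the representation need not be unique: the endpoint bridges \(D(p_1)=D(0)\) and \(D(p_L)=D(1)\) are identically zero by \eqref{eq:bridge_endpoint_condition}. Non-uniqueness is harmless because linearity makes \(\alpha_m(R)\) independent of which representation is chosen.
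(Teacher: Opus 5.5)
Your proof is correct and follows the paper's own argument. Necessity holds because each generator lies in \(\mathcal{V}_{cap,\mathcal{P}}\), and sufficiency follows by writing an arbitrary element as a finite linear combination of \(R_M^e\) and the bridges \(D(p_l)\) and applying linearity of \(\alpha_m(\cdot)\). Your added remarks are accurate but go beyond what the paper's proof needs: well-definedness of \(\alpha_m\), the absence of any closure issue for a finite span, and the harmless non-uniqueness of the representation because the endpoint bridges are identically zero.
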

	
	\begin{proof}
		Necessity follows because each generator belongs to
		\(\mathcal{V}_{cap,\mathcal{P}}\). For sufficiency, every
		\(Z\in\mathcal{V}_{cap,\mathcal{P}}\) can be written as
		\(Z=b_0R_M^e+\sum_{l=1}^{L}b_lD(p_l)\)
		for fixed coefficients \(b_0,\ldots,b_L\), and linearity of
		\(\alpha_m(\cdot)\) gives
		\[
		\alpha_m(Z)
		=
		b_0\alpha_m(R_M^e)
		+
		\sum_{l=1}^{L}
		b_l\alpha_m\{D(p_l)\}
		=
		0.
		\]
	\end{proof}
	
	The proposition is an algebraic closure result, not a claim of full model validity; it concerns the return series generated by the implemented bridges and their fixed linear combinations.
	
	An additional distinction arises between the fractional benchmark and the whole-stock implementation. Under fractional assignment,
	\begin{equation}
		C_t^{e,\ast}(p)
		=
		D_t^\ast(p)+pR_{M,t}^e,
		\label{eq:fractional_prefix_excess_identity}
	\end{equation}
	where \(C_t^{e,\ast}(p)=C_t^\ast(p)-pR_{f,t}\). Because \(p\) is constant, zero alpha on the market and all fractional bridges implies zero alpha on fractional prefixes, tails, intervals, and their fixed step-function combinations. For the implemented bridge, however,
	\begin{equation}
		C_t^e(p)
		=
		D_t(p)+s_t(p)R_{M,t}^e,
		\label{eq:implemented_prefix_excess_identity}
	\end{equation}
	and \(s_t(p)\) varies over time as portfolio weights evolve and whole stocks cross target cutoffs. Therefore, zero alpha on \(D(p)\) and \(R_M^e\) does not mechanically imply zero alpha on the unscaled prefix return \(C^e(p)\). The empirical claim is correspondingly narrower and more precise: the model is evaluated on equal-exposure prefix-versus-market bridges, not on raw prefix portfolios considered in isolation.
	
	\subsection{Fixed-grid null distribution}
	\label{subsec:functional_null_distribution}
	
	Stack the \(L\) bridge returns into \(\boldsymbol{D}_t=(D_t(p_1),\ldots,D_t(p_L))'\), let \(x_{m,t}=(1,f_{m,t}')'\), and let \(\widehat{\boldsymbol{\alpha}}_m\) be the intercept row from the multivariate OLS regression of \(\boldsymbol{D}_t\) on \(x_{m,t}\). Under standard moment, stationarity, and weak-dependence conditions,
	\begin{equation}
		\sqrt{T}
		\left(
		\widehat{\boldsymbol{\alpha}}_m
		-
		\boldsymbol{\alpha}_m
		\right)
		\Rightarrow
		N(0,\Omega_m),
		\label{eq:alpha_vector_asymptotic}
	\end{equation}
	where \(\Omega_m\) is the long-run covariance matrix of the grid-level intercept influence process. Under the zero-curve null, the finite-grid sampling approximation is therefore
	\begin{equation}
		\widetilde{\boldsymbol{\alpha}}^{(b)}_{m,0}
		\sim
		N
		\left(
		\boldsymbol{0},
		\frac{\widehat{\Omega}_m}{T}
		\right).
		\label{eq:finite_grid_gaussian_null}
	\end{equation}
	For each Gaussian draw, I apply the same trapezoidal weights and grid maximum used for the observed curve:
	\begin{align}
		\widetilde{IAE}^{(b)}_{m,0}
		&=
		\sum_{l=1}^{L}
		a_l
		\left|
		\widetilde{\alpha}^{(b)}_{m,0}(p_l)
		\right|,
		\\
		\widetilde{ISE}^{(b)}_{m,0}
		&=
		\sum_{l=1}^{L}
		a_l
		\widetilde{\alpha}^{(b)}_{m,0}(p_l)^2,
		\\
		\widetilde{SUP}^{(b)}_{m,0}
		&=
		\max_l
		\left|
		\widetilde{\alpha}^{(b)}_{m,0}(p_l)
		\right|.
		\label{eq:simulated_functional_nulls}
	\end{align}
	The upper-tail probability of the observed functional relative to these draws gives its HAC-Gaussian-process \(p\)-value. Calling this a Gaussian-process procedure emphasizes the joint covariance structure across the grid; computationally, it is a multivariate Gaussian approximation on the fixed empirical grid.
	
	The directional statistic is treated separately. I test the alpha of the rank-area return \(Y_t\) using a conventional HAC standard error, supplying a scalar directional test without requiring a nonlinear transformation of the full alpha vector. The grid-integrated \(\widehat{SA}_m\) remains the reported curve statistic, while the rank-area alpha and their difference provide a separate implementation and boundary-quantization check.
	
	\subsection{Interpretation of the diagnostic}
	\label{subsec:interpretation_of_pass}
	
	The population null \(\boldsymbol{\alpha}_m=\boldsymbol{0}\) means that model \(m\) leaves no alpha on any implemented equal-exposure bridge along the tested cap-axis grid. Conditional on the aggregate-market consistency gate, it also implies zero alpha on every fixed linear combination of these bridge returns by \Cref{prop:cap_axis_sufficiency}.
	
	This restriction is more informative than testing a single body--tail cutoff. Each \(p_l\) asks whether the capitalization prefix ending at that location differs systematically from an equal-exposure position in the whole market. The full curve records where such deviations arise, while \(SA\), \(IAE\), \(ISE\), and \(SUP\) separate direction, aggregate magnitude, quadratic concentration, and local extremity.
	
	At the same time, the interpretation should not be overstated. Because the implemented exposure \(s_t(p_l)\) is time varying, passing the bridge test is not algebraically equivalent to pricing every raw prefix or size portfolio. It is also not full SDF validity and does not imply that the model dominates in factor spanning or maximum-Sharpe terms. It establishes zero-alpha consistency only on the tested bridge-return space.
	
	A statistical nonrejection is weaker than the population null: it means that the estimated directional and magnitude functionals are not large relative to their sampling distributions at the specified formation and observation frequencies. Conversely, rejection identifies a systematic pricing-error footprint on the cap-axis bridge family, and the shape of the estimated curve localizes the capitalization region in which the model-relative error accumulates.
	
	The same object can be used at two levels. At the model level, the bridge curve tests the cap-axis zero-alpha consistency of an assembled factor model. At the factor level, its magnitude functionals provide a coordinate describing the footprint left when an individual factor is added to the market. Neither use replaces mean--variance spanning; the cap-axis diagnostic records a distinct restriction on how factor-model pricing errors are distributed inside the aggregate market.
	
	\section{Data and Methodology}
	\label{sec:data}
	
	This section describes the empirical implementation of the cap-axis zero-alpha diagnostic. I construct an investable CRSP market, verify that its aggregate return is close to standard provider market series, and implement the whole-stock bridge under annual Jul--Jun formation and buy-and-hold accounting. Formal inference for the nonlinear bridge functionals uses a finite-grid HAC-Gaussian-process null. A separate residual-block calibration evaluates finite-sample size and power. No stock-ordering randomization is used.
	
	\subsection{Constructing the CRSP investable universe}
	\label{subsec:crsp-universe}
	
	I use CRSP individual-stock data \citep{CRSP}, the Fama--French Data Library \citep{FrenchDataLibrary}, and the q-factor provider data \citep{globalqFactors}. The stock-level sample runs from January 3, 1967 to December 31, 2024. The base universe consists of common stocks listed on the NYSE, AMEX, and NASDAQ. The investability filter removes the extreme microcap and illiquidity tail while retaining nearly all of the market's economic weight and trading activity.
	
	Investability is updated at each month-end using market capitalization and liquidity. The month-end market capitalization of stock \(i\) is
	\begin{equation}
		ME_{i,t}
		=
		\left|PRC_{i,t}\right|
		\times SHROUT_{i,t}
		\times 1{,}000.
		\label{eq:me_definition_data}
	\end{equation}
	After sorting stocks in descending \(ME\), define the cumulative market-cap share of stock \(i\) as
	\begin{equation}
		CumME_{i,t}
		=
		\frac{
			\sum_{j:ME_{j,t}\geq ME_{i,t}}ME_{j,t}
		}{
			\sum_jME_{j,t}
		}.
		\label{eq:cumme_definition_data}
	\end{equation}
	
	Liquidity is measured by average dollar volume over the most recent 63 trading days:
	\begin{align}
		DVOL_{i,d}
		&=
		\left|PRC_{i,d}\right|VOL_{i,d},\\
		ADV63_{i,t}
		&=
		\frac{1}{N_{i,t}}
		\sum_{d\in\mathcal{W}_t}DVOL_{i,d},
		\label{eq:adv63_definition_data}
	\end{align}
	where \(\mathcal{W}_t\) is the 63-trading-day window ending at month-end \(t\), and \(N_{i,t}\) is the number of days with observed volume in that window. I use monthly cross-sectional liquidity percentiles rather than a fixed dollar-volume threshold so that the screen adapts to changes in the nominal scale of the market over the long sample.
	
	The selection rule uses hysteresis. An incumbent exits if \(CumME_{i,t}>0.999\) or if its \(ADV63\) falls in the bottom 2.5\% of that month's cross-section. A non-constituent enters only if \(CumME_{i,t}\leq0.995\) and its \(ADV63\) is at least at the fifth percentile. The stricter entry condition reduces mechanical turnover near the boundary. For early NASDAQ observations, I defer the liquidity screen until 63 trading days of volume history are available while continuing to apply the market-cap screen. All screening variables are observed by month-end, and the resulting universe is first applied on the following month's first trading day.
	
	The final universe contains, on average, 77.6\% of the base common-stock sample while preserving approximately 99.7\% of total market capitalization and 63-day average dollar volume. At the end of 2024, it contains 2,426 of the 3,804 base-sample stocks, with market-cap and dollar-volume preservation rates of 99.78\% and 99.28\%. The screen therefore removes a large number of economically negligible and illiquid securities without materially changing the scale of the aggregate market.
	
	\subsection{Aggregate-market construction and validation}
	\label{subsec:market-replication}
	
	I first construct a daily value-weighted market return, denoted \(cMKT\), from the investable CRSP universe. This production-market series is used to verify the return accounting, delisting treatment, and coverage of the stock panel against standard market-factor providers. The month-end universe is applied from the next month's first trading day, returns include dividend-inclusive and observable delisting returns, and portfolio cash flows are reinvested proportionally across surviving positive-value positions.
	
	This production-market validation should be distinguished from the mode-specific market benchmark inside each bridge. Under annual Jul--Jun formation, the bridge is constructed against the buy-and-hold aggregate portfolio generated by the same annual formation schedule; under alternative formation cycles, it is constructed against the corresponding mode-specific aggregate portfolio. Endpoint closure and body--tail recombination therefore hold relative to the market portfolio actually used by that formation design. The \(cMKT\) comparison is a data and accounting validation rather than a formal premise of the zero-investment bridge test.
	
	I compare \(cMKT\) with the total-market returns implied by the Fama--French and q5 provider series. In the common sample, the correlation between \(cMKT\) and the Fama--French market return is 0.999778, with \(R^2=0.999556\), daily RMSE of 2.21 bp, and mean absolute error of 1.35 bp. Against the q5 market return, the correlation is 0.999799, with \(R^2=0.999598\), RMSE of 2.10 bp, and mean absolute error of 1.27 bp. The q5 factors used below are the provider's distributed daily and monthly series, not frequency conversions constructed from one another.
	
	\begin{figure}[!ht]
		\centering
		\includegraphics[width=0.95\linewidth]{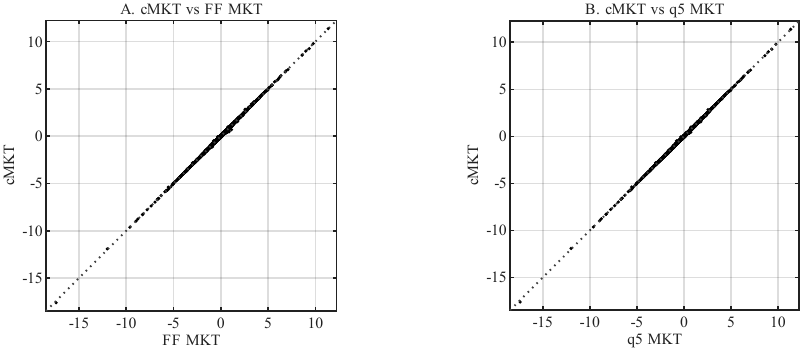}
		\caption{CRSP-based market return versus standard market factors}
		\label{fig:mkt-replication}
		\begin{minipage}{0.95\linewidth}
			\footnotesize
			\emph{Note:} The figure compares the market return \(cMKT\), constructed from the CRSP investable universe, with the Fama--French and q5 market returns. The comparison is on a total-market-return basis, with each provider's risk-free rate added back to its excess market return. The dashed line is the 45-degree line. The common sample runs from January 3, 1967 to December 31, 2024.
		\end{minipage}
	\end{figure}
	
	As an additional consistency check, I regress the whole investable-market excess return on each candidate factor model. Since every candidate contains a market factor, a material intercept would indicate a mismatch in the aggregate-market implementation before the internal cap-axis decomposition is considered. \Cref{tab:market-sanity-check} reports the results. Every regression has \(R^2\) above 0.999, and every alpha \(p\)-value exceeds 0.05 under Newey--West standard errors with a 21-trading-day lag.
	
	\begin{table}[!ht]
		\centering
		\singlespacing
		\caption{Market sanity check on the whole market return}
		\label{tab:market-sanity-check}
		\footnotesize
		\setlength{\tabcolsep}{4.8pt}
		\begin{tabular}{lrrrrr}
			\toprule
			Model & Observations & Annual alpha (bp) & \(t(\alpha)\) & \(p(\alpha)\) & \(R^2\) \\
			\midrule
			CAPM    & 14,598 & -1.02 & -0.22 & 0.829 & 0.999556 \\
			FF3     & 14,598 &  0.21 &  0.05 & 0.963 & 0.999571 \\
			Carhart & 14,598 & -0.06 & -0.01 & 0.989 & 0.999571 \\
			FF5     & 14,598 & -1.07 & -0.24 & 0.812 & 0.999572 \\
			FF6     & 14,598 & -1.18 & -0.26 & 0.792 & 0.999572 \\
			q5      & 14,598 & -5.78 & -1.44 & 0.151 & 0.999619 \\
			\bottomrule
		\end{tabular}
		
		\vspace{0.4em}
		\parbox{0.95\textwidth}{\footnotesize
			\emph{Note:} The table reports regressions of the whole investable-market excess return on each candidate factor model. Alpha is annualized by multiplying the daily intercept by 252 and is reported in basis points. The sample runs from January 3, 1967 to December 31, 2024. \(p\)-values use Newey--West standard errors with a 21-trading-day lag.}
	\end{table}
	
	Passing this check does not imply that every portfolio constructed from the universe has zero alpha. It shows that the stock panel, return accounting, and provider market factors agree closely at the aggregate level. The cap-axis analysis then asks whether zero-investment equal-exposure bridges reveal model-relative pricing errors within that market.
	
	\subsection{Annual formation and buy-and-hold accounting}
	\label{subsec:design-frequency}
	
	The baseline design uses annual Jul--Jun formation. On the first trading day of each July, eligible stocks are sorted in descending order of prior market capitalization, and initial holdings are proportional to that market capitalization. No new stocks are admitted until the next annual formation date, and the formation-date capitalization order is held fixed throughout the holding year.
	
	Between formation dates, portfolio positions evolve on a buy-and-hold basis. Let \(H_{i,t}\) denote stock \(i\)'s beginning-of-day holding value, with start-of-day weight \(w_{i,t}=H_{i,t}/\sum_jH_{j,t}\). Bridge returns, rank-area returns, and the mode-specific market return are computed from these start-of-day weights and dividend-inclusive total returns.
	
	To update holdings, I separate the ex-dividend return \(RETX_{i,t}\) from the dividend-inclusive return \(RETADJ_{i,t}\). The ex-dividend position value is \(H^{x}_{i,t+1}=H_{i,t}(1+RETX_{i,t})\), and portfolio-level dividend cash is
	\begin{equation}
		DivCash_t
		=
		\sum_i
		H_{i,t}
		\left(
		RETADJ_{i,t}-RETX_{i,t}
		\right).
		\label{eq:dividend_cash_pool}
	\end{equation}
	This cash is reinvested in proportion to positive ex-dividend position values:
	\begin{equation}
		H_{i,t+1}
		=
		H^{x}_{i,t+1}
		+
		DivCash_t
		\frac{
			H^{x}_{i,t+1}
		}{
			\sum_jH^{x}_{j,t+1}
		}.
		\label{eq:dividend_pool_reinvestment}
	\end{equation}
	The calculation is rescaled when necessary to preserve the portfolio's total post-return value. Observable delisting returns are incorporated into \(RETADJ\). This accounting lets portfolio weights drift naturally while avoiding an artificial assumption that each stock's cash distribution is reinvested only into that same stock.
	
	The annual formation schedule determines the ordering and holding process; the observation frequency determines how the resulting payoff is evaluated. At the daily frequency, daily bridge returns are regressed on provider daily factors. At the monthly frequency, the same daily bridge path is aggregated to calendar months and regressed on the provider's distributed monthly factors.
	
	For an aggregate long-only return, calendar-month aggregation uses compounding:
	\begin{equation}
		R_{M,m}
		=
		\prod_{d\in m}
		\left(
		1+R_{M,d}
		\right)-1.
		\label{eq:monthly_market_compounding}
	\end{equation}
	The bridge and rank-area returns are zero-investment return contributions and are summed within the month:
	\begin{align}
		D_m(p_l)
		&=
		\sum_{d\in m}D_d(p_l),\\
		Y_m
		&=
		\sum_{d\in m}Y_d.
		\label{eq:monthly_zero_investment_sum}
	\end{align}
	The monthly analysis is therefore not obtained by averaging daily coefficient estimates. It reconstructs monthly observations from the same annual-formation bridge payoff and evaluates them using provider monthly factors.
	
	The daily--monthly comparison is reduced-form. It captures both the temporal aggregation of the test return and differences between providers' daily and monthly factor implementations. I interpret it as a horizon diagnostic rather than as a clean decomposition of these two channels.
	
	\subsection{Whole-stock bridge implementation}
	\label{subsec:empirical-integral-engine}
	
	The empirical engine implements the finite-market object defined in \Cref{subsec:discrete_integral_definition}. The baseline grid is
	\begin{equation}
		\mathcal{P}
		=
		\{0,0.005,\ldots,0.995,1\},
		\qquad
		L=201.
		\label{eq:baseline_cap_grid}
	\end{equation}
	At each return date, the stocks retain their formation-date order, while their start-of-day buy-and-hold weights determine the cumulative market exposure along that order.
	
	For grid target \(p_l\), define
	\begin{equation}
		k(l,t)
		=
		\begin{cases}
			0, & p_l=0,\\[2pt]
			\min\left\{
			k:
			\sum_{i=1}^{k}w_{i,t}\geq p_l
			\right\},
			& p_l\in(0,1],
		\end{cases}
		\qquad
		s_{l,t}
		=
		\sum_{i=1}^{k(l,t)}w_{i,t}.
		\label{eq:empirical_realized_share_concise}
	\end{equation}
	The engine includes the boundary stock in full. It records \(s_{l,t}\) separately so that the difference between the target coordinate \(p_l\) and the realized whole-stock exposure is observable.
	
	The prefix contribution and aggregate-market return are
	\begin{align}
		B_{l,t}
		&=
		\sum_{i=1}^{k(l,t)}
		w_{i,t}R_{i,t},\\
		R^M_t
		&=
		\sum_{i=1}^{N_t}
		w_{i,t}R_{i,t}.
	\end{align}
	The implemented bridge is
	\begin{equation}
		D_{l,t}
		=
		B_{l,t}
		-
		s_{l,t}R^M_t.
		\label{eq:empirical_bridge_return_concise}
	\end{equation}
	The prefix and market legs have the same realized investment exposure \(s_{l,t}\), so \(D_{l,t}\) is zero-investment, and the endpoint observations are set by construction to \(D_{1,t}=D_{L,t}=0\).
	
	For each date and grid point, I also construct the complementary tail bridge and verify \(D_{l,t}+D^T_{l,t}=0\). The maximum recombination residual is compared with a tolerance of \(10^{-8}\); in the completed runs, the identity holds to machine precision. I additionally verify that the prefix and tail return contributions recombine to the mode-specific aggregate-market return.
	
	\subsection{Factor models and common samples}
	\label{subsec:factor-models-samples}
	
	I compare six model specifications. CAPM contains the market excess return. FF3 contains the market, SMB, and HML. Carhart adds momentum to FF3. FF5 contains the market, SMB, HML, RMW, and CMA. FF6 adds momentum to FF5. The q5 model contains its provider market factor together with the ME, IA, ROE, and expected-growth factors.
	
	For model \(m\) and grid point \(p_l\), I estimate
	\begin{equation}
		D_t(p_l)
		=
		\alpha_m(p_l)
		+
		\beta_m(p_l)'f_{m,t}
		+
		\varepsilon_{m,t}(p_l).
		\label{eq:empirical_bridge_alpha_regression}
	\end{equation}
	Because the bridge is zero-investment, it enters the regression without subtracting the risk-free rate. The provider factors are expressed in return units consistent with the bridge panel before estimation.
	
	All model comparisons within a frequency use the common intersection of the bridge dates and the six factor-provider samples. The daily common sample contains 14,598 trading days from January 3, 1967 to December 31, 2024. The monthly common sample contains 696 calendar months. This common-sample rule prevents differences in model statistics from being driven by different estimation windows.
	
	The coefficient matrix is estimated jointly across the grid. Let
	\[
	X_m
	=
	\begin{bmatrix}
		1 & f_{m,1}'\\
		\vdots & \vdots\\
		1 & f_{m,T}'
	\end{bmatrix},
	\qquad
	D
	=
	\begin{bmatrix}
		D_1(p_1) & \cdots & D_1(p_L)\\
		\vdots & \ddots & \vdots\\
		D_T(p_1) & \cdots & D_T(p_L)
	\end{bmatrix}.
	\]
	Then
	\begin{equation}
		\widehat{B}_m
		=
		(X_m'X_m)^{-1}X_m'D,
		\label{eq:bridge_multivariate_ols}
	\end{equation}
	and the first row of \(\widehat{B}_m\) is \(\widehat{\boldsymbol{\alpha}}_m'=(\widehat{\alpha}_m(p_1),\ldots,\widehat{\alpha}_m(p_L))\). This joint matrix expression is computationally equivalent to estimating the \(L\) bridge regressions separately with the same regressors.
	
	\subsection{Rank-area and grid functionals}
	\label{subsubsec:rank-area-scalar-diagnostic-concise}
	
	For each date, define stock \(i\)'s midpoint cumulative share as
	\begin{equation}
		m_{i,t}
		=
		\sum_{j<i}w_{j,t}
		+
		\frac{1}{2}w_{i,t}.
		\label{eq:midpoint_share_concise}
	\end{equation}
	The rank-area return is
	\begin{equation}
		Y_t
		=
		\sum_{i=1}^{N_t}
		w_{i,t}R_{i,t}
		\left(
		\frac{1}{2}-m_{i,t}
		\right).
		\label{eq:rank_area_return_concise}
	\end{equation}
	I regress \(Y_t\) on the same factor matrix used for the bridge curve; its intercept is the rank-area alpha. The daily regression uses a Newey--West lag of 21 trading days, and the monthly regression uses a lag of six months.
	
	The reported curve statistics are computed directly from the estimated grid alpha path. With trapezoidal weights \(a_l\),
	\begin{align}
		\widehat{SA}_m
		&=
		\sum_{l=1}^{L}
		a_l\widehat{\alpha}_m(p_l),\\
		\widehat{IAE}_m
		&=
		\sum_{l=1}^{L}
		a_l
		\left|
		\widehat{\alpha}_m(p_l)
		\right|,\\
		\widehat{ISE}_m
		&=
		\sum_{l=1}^{L}
		a_l
		\widehat{\alpha}_m(p_l)^2,\\
		\widehat{SUP}_m
		&=
		\max_l
		\left|
		\widehat{\alpha}_m(p_l)
		\right|.
		\label{eq:empirical_bridge_functionals}
	\end{align}
	
	The grid-integrated \(\widehat{SA}_m\) and the rank-area alpha are not imposed to be identical. The former integrates the whole-stock bridge curve over the fixed grid; the latter implements the fractional midpoint representation as one return. I report \(\widehat{SA}_m\) as the curve's signed area and use the rank-area alpha and HAC \(t\)-statistic as its scalar directional companion. Their annualized difference is recorded as a boundary-quantization audit and remains below 1.5 bp for every baseline model.
	
	The linear statistics \(\widehat{SA}_m\) and the rank-area alpha preserve sign. The nonlinear magnitude functionals \(\widehat{IAE}_m\), \(\widehat{ISE}_m\), and \(\widehat{SUP}_m\) require joint inference for the full alpha vector, described next.
	
	\subsection{Finite-grid HAC-Gaussian-process inference}
	\label{subsubsec:hac_gp_inference}
	
	Formal inference for the nonlinear functionals is based on the joint long-run covariance matrix of the estimated intercept vector. Let \(x_{m,t}\) denote the \(t\)-th row of \(X_m\), written as a column, and let \(e_1\) select the intercept. Define
	\begin{equation}
		\widehat{Q}_m
		=
		\frac{1}{T}X_m'X_m.
		\label{eq:factor_second_moment_matrix}
	\end{equation}
	Let \(\widehat{\boldsymbol{\varepsilon}}_{m,t}=(\widehat{\varepsilon}_{m,t}(p_1),\ldots,\widehat{\varepsilon}_{m,t}(p_L))'\) be the residual vector across the cap-axis grid. The estimated intercept influence vector is
	\begin{equation}
		\widehat{\boldsymbol{\psi}}_{m,t}
		=
		\left(
		e_1'
		\widehat{Q}_m^{-1}
		x_{m,t}
		\right)
		\widehat{\boldsymbol{\varepsilon}}_{m,t}.
		\label{eq:alpha_influence_vector}
	\end{equation}
	
	I estimate the \(L\times L\) long-run covariance matrix of this process with a multivariate Newey--West estimator:
	\begin{equation}
		\widehat{\Omega}_m
		=
		\widehat{\Gamma}_{m,0}
		+
		\sum_{h=1}^{H}
		\left(
		1-\frac{h}{H+1}
		\right)
		\left(
		\widehat{\Gamma}_{m,h}
		+
		\widehat{\Gamma}_{m,h}'
		\right),
		\label{eq:cross_grid_hac}
	\end{equation}
	where
	\begin{equation}
		\widehat{\Gamma}_{m,h}
		=
		\frac{1}{T}
		\sum_{t=h+1}^{T}
		\widehat{\boldsymbol{\psi}}_{m,t}
		\widehat{\boldsymbol{\psi}}_{m,t-h}'.
		\label{eq:cross_grid_hac_lag}
	\end{equation}
	The baseline daily lag is \(H=21\) trading days, and the baseline monthly lag is \(H=6\) months.
	
	Under the zero-curve null, I simulate
	\begin{equation}
		\widetilde{\boldsymbol{\alpha}}_{m,0}^{(b)}
		\sim
		N
		\left(
		\boldsymbol{0},
		\frac{\widehat{\Omega}_m}{T}
		\right),
		\qquad
		b=1,\ldots,B.
		\label{eq:empirical_hac_gp_draw}
	\end{equation}
	Each draw is expressed in the same return units as the estimated alpha vector, so the observed and simulated functionals can be compared directly without additional asymptotic rescaling.
	
	For each draw, I compute
	\begin{align}
		\widetilde{IAE}_{m,0}^{(b)}
		&=
		\sum_{l=1}^{L}
		a_l
		\left|
		\widetilde{\alpha}_{m,0}^{(b)}(p_l)
		\right|,\\
		\widetilde{ISE}_{m,0}^{(b)}
		&=
		\sum_{l=1}^{L}
		a_l
		\widetilde{\alpha}_{m,0}^{(b)}(p_l)^2,\\
		\widetilde{SUP}_{m,0}^{(b)}
		&=
		\max_l
		\left|
		\widetilde{\alpha}_{m,0}^{(b)}(p_l)
		\right|.
		\label{eq:empirical_hac_gp_functionals}
	\end{align}
	The upper-tail \(p\)-value for functional \(G\in\{IAE,ISE,SUP\}\) is
	\begin{equation}
		p^{HAC\text{-}GP}_{G,m}
		=
		\frac{
			1+
			\sum_{b=1}^{B}
			\mathbf{1}
			\left\{
			\widetilde{G}_{m,0}^{(b)}
			\geq
			\widehat{G}_m
			\right\}
		}{
			B+1
		}.
		\label{eq:hac_gp_empirical_pvalue}
	\end{equation}
	The baseline uses \(B=50{,}000\) antithetic Gaussian draws.
	
	Before simulation, I symmetrize \(\widehat{\Omega}_m\). A dense grid can produce small negative sample eigenvalues from numerical error or near-collinearity. I therefore impose a small eigenvalue floor, record the size of the adjustment, and simulate from the resulting positive-semidefinite matrix. This correction affects numerical factorization rather than the observed alpha path.
	
	The rank-area alpha is tested separately with its Newey--West standard error. I do not form conventional \(t\)-statistics for \(IAE\), \(ISE\), or \(SUP\), because these statistics are nonlinear, nonnegative functions of a highly dependent alpha vector.
	
	\subsection{Finite-sample size and power calibration}
	\label{subsubsec:finite-sample-calibration}
	
	I evaluate the finite-sample behavior of the nonlinear functional tests using the annual Jul--Jun bridge over the 1967--2024 monthly sample. The calibration holds each model's factor matrix fixed and uses its estimated residual curve as the sampling environment.
	
	For model \(m\), write the fitted bridge panel as
	\begin{equation}
		D
		=
		\boldsymbol{1}\widehat{\boldsymbol{\alpha}}_m'
		+
		F_m\widehat{B}_{m,f}
		+
		\widehat{E}_m,
		\label{eq:power_bridge_decomposition}
	\end{equation}
	where \(F_m\widehat{B}_{m,f}\) is the fitted non-intercept component and \(\widehat{E}_m\) is the \(T\times L\) residual panel. I center the residuals separately at every grid point.
	
	A simulation draw resamples the entire residual curve for each selected month rather than resampling grid points independently. The monthly indices are generated in six-month circular blocks. This preserves contemporaneous dependence across all 201 grid points and local serial dependence within each resampled block.
	
	The simulated bridge panel under alternative strength \(\lambda\) is
	\begin{equation}
		D_m^{(b)}(\lambda)
		=
		\boldsymbol{1}
		\left(
		\lambda
		\widehat{\boldsymbol{\alpha}}_m
		\right)'
		+
		F_m\widehat{B}_{m,f}
		+
		\widehat{E}_m^{(b)},
		\label{eq:power_simulated_bridge_panel}
	\end{equation}
	where \(\lambda\in\{0,0.25,0.50,0.75,1.00\}\). The case \(\lambda=0\) imposes the zero-curve null and measures empirical size. Positive \(\lambda\) values progressively inject the model-specific estimated cap-axis alpha curve and measure power against a structured alternative.
	
	For every simulated panel, I re-estimate the bridge regression and recompute \(\widehat{IAE}\), \(\widehat{ISE}\), and \(\widehat{SUP}\). Rejection is evaluated using two critical-value sources. The baseline source is the 5\% critical value from 50,000 HAC-Gaussian-process draws. The second is a residual block-bootstrap critical value based on 4,999 six-month circular-block draws. The block-bootstrap critical-value draws and the 4,999 evaluation draws are generated independently.
	
	The reported calibration averages rejection frequencies across the six models and the three nonlinear functionals. This aggregation summarizes whether the test family is approximately size-controlled under the null and whether rejection increases with alternative strength. Model-level power is also retained because the alternatives inherit the heterogeneous magnitudes of the estimated empirical curves.
	
	The residual-block calibration does not randomize the stock order, portfolio membership, market weights, or cap-rank coordinate. It evaluates the sampling behavior of the zero-alpha tests conditional on the observed annual-Jul bridge design. It therefore replaces, rather than supplements, an ordering-placebo exercise.
	
	\subsection{Auxiliary design checks}
	\label{subsec:auxiliary-design-checks}
	
	I examine two implementation channels separately from the finite-sample calibration. First, I vary the formation schedule. In addition to annual Jul--Jun formation, I construct daily, monthly, quarterly, and annual-Jan bridge paths. Daily formation re-sorts stocks on prior market capitalization every trading day and uses contemporaneous value-weighted return accounting. Non-daily schedules hold the formation order fixed and use the same buy-and-hold dividend-pool accounting as the annual-Jul baseline.
	
	Second, I apply a Dimson-style lead--lag correction to the daily bridge. For symmetric half-window \(k\), estimate
	\begin{equation}
		D_t(p_l)
		=
		\alpha_m^{(k)}(p_l)
		+
		\sum_{h=-k}^{k}
		\beta_{m,h}^{(k)}(p_l)'
		f_{m,t+h}
		+
		\varepsilon_{m,t}^{(k)}(p_l).
		\label{eq:dimson_bridge_regression}
	\end{equation}
	The corrected factor exposure is the sum of coefficients across the window,
	\begin{equation}
		\beta_{m,Dimson}^{(k)}(p_l)
		=
		\sum_{h=-k}^{k}
		\beta_{m,h}^{(k)}(p_l),
		\label{eq:dimson_summed_beta}
	\end{equation}
	and the corrected bridge alpha is the intercept \(\alpha_m^{(k)}(p_l)\). I use \(k\in\{1,2,3,5,8,10,13,21\}\). For every \(k\), the naive and corrected estimates are computed on the same \(k\)-trimmed sample so that attenuation is not driven by changing endpoints. The same correction is applied to the rank-area return.
	
	Alternative rebalancing schedules address whether the findings depend on one formation convention. The lead--lag exercise addresses whether daily bridge alphas are absorbed when factors are allowed to affect cap-rank portfolios with short delays. Neither exercise changes the formal HAC-GP null used in the main tables.
	
	\subsection{Factor-coordinate scan}
	\label{subsec:factor-scan-methodology}
	
	The model-level analysis is complemented by a monthly factor-coordinate scan. I use the same annual Jul--Jun bridge path and add each candidate factor \(g_j\) separately to the market:
	\begin{equation}
		D_t(p_l)
		=
		\alpha_j(p_l)
		+
		\beta_{j,M}(p_l)R^e_{M,t}
		+
		\beta_{j,g}(p_l)g_{j,t}
		+
		\varepsilon_{j,t}(p_l).
		\label{eq:factor_scan_bridge_regression}
	\end{equation}
	The factor's cap-axis footprint is summarized by the same \(SA\), \(IAE\), \(ISE\), and \(SUP\) functionals used in the model-level analysis.
	
	The candidate universe contains 155 monthly factors with sufficient overlap over the 1967--2024 common sample: 145 Global Factor Data candidates based on the replicated anomaly library of \citet{JKP23}, six Fama--French/Carhart factors, and four nonmarket q5 factors. The scan is descriptive and does not interpret 155 individual functional statistics as separate formal discoveries.
	
	For comparison with conventional factor performance, I calculate the annualized maximum-Sharpe-ratio gain from adding candidate \(g_j\) to the market. If \(\mu_j\) and \(\Sigma_j\) are the sample mean vector and covariance matrix of the relevant monthly factor set, its annualized maximum Sharpe ratio is
	\begin{equation}
		SR_j^{ann}
		=
		\sqrt{
			12\,
			\mu_j'
			\Sigma_j^{-1}
			\mu_j
		}.
		\label{eq:annualized_max_sharpe}
	\end{equation}
	The candidate's Sharpe contribution is
	\begin{equation}
		\Delta SR_j
		=
		SR_{\{M,g_j\}}^{ann}
		-
		SR_{\{M\}}^{ann}.
		\label{eq:factor_scan_delta_sr}
	\end{equation}
	
	I audit whether the cap-axis footprint merely reproduces conventional size-factor exposure using the canonical FF3 SMB factor. For each candidate other than FF3 SMB itself, estimate
	\begin{equation}
		g_{j,t}
		=
		a_j
		+
		b_{j,M}R^e_{M,t}
		+
		\gamma_{j,size}SMB^{FF3}_t
		+
		u_{j,t}.
		\label{eq:size_loading_audit_methodology}
	\end{equation}
	I then examine whether \(\left|\widehat{\gamma}_{j,size}\right|\) explains the cross-section of cap-axis magnitude norms and whether residualizing those norms with respect to the size loading changes their relation with \(\Delta SR_j\). The size loading is omitted for FF3 SMB itself because it is the benchmark size proxy.
	
	This scan uses the same bridge construction, common-sample alignment, and grid functionals as the main monthly analysis. Only the regression factor set changes from an assembled model to the market plus one candidate factor.
	
	\section{Empirical Results}
	\label{sec:result}
	
	This section reports the cap-axis diagnostic under the main annual Jul--Jun formation design. I evaluate the same cap-axis payoff at the daily and monthly factor frequencies. Robustness to alternative rebalancing cycles is reported in \Cref{sec:robust}.
	
	\subsection{Daily results}
	\label{subsec:daily-results}
	
	The daily sample runs from January 3, 1967 to December 31, 2024, with 14{,}598 trading days. Each model's bridge-alpha curve is estimated on a uniform grid of 201 cutoffs. \Cref{tab:daily-bridge} reports the grid functionals, the rank-area HAC statistic, Rank \(R^2\), the coherence ratio \(|\widehat{SA}|/\widehat{IAE}\), and formal zero-bridge-alpha \(p\)-values. The signed-area \(p\)-value is from the rank-area HAC test; the \(IAE\), \(ISE\), and \(SUP\) \(p\)-values are from the HAC-Gaussian-process functional null.
	
	\begin{table}[!htbp]
		\centering
		\singlespacing
		\caption{Annual Jul--Jun rebalancing, daily frequency: cap-axis bridge functional statistics and inference}
		\label{tab:daily-bridge}
		\footnotesize
		\setlength{\tabcolsep}{3pt}
		\begin{threeparttable}
			\begin{tabular}{lrrrrrrrrrrr}
				\toprule
				Model & $\widehat{SA}$ & $t_{NW}$ & Rank $R^2$ & $\widehat{IAE}$ & $\widehat{ISE}$ & $\widehat{SUP}$ & $\dfrac{|\widehat{SA}|}{\widehat{IAE}}$ & $p^{NW}_{SA}$ & $p^{HAC\text{-}GP}_{IAE}$ & $p^{HAC\text{-}GP}_{ISE}$ & $p^{HAC\text{-}GP}_{SUP}$ \\
				\midrule
				CAPM    & $-10.9$ & $-0.57$ & $0.003$ & $11.0$ & $186$    & $26.7$ & $0.99$ & $0.566$  & $0.616$  & $0.583$  & $0.566$  \\
				FF3     & $21.2$  & $1.94$  & $0.614$ & $21.7$ & $608$    & $42.8$ & $0.97$ & $0.053$  & $0.048$  & $0.055$  & $0.064$  \\
				Carhart & $20.8$  & $1.88$  & $0.614$ & $21.4$ & $587$    & $42.0$ & $0.97$ & $0.060$  & $0.057$  & $0.066$  & $0.077$  \\
				FF5     & $7.6$   & $0.71$  & $0.654$ & $10.3$ & $168$    & $29.0$ & $0.74$ & $0.479$  & $0.370$  & $0.356$  & $0.303$  \\
				FF6     & $8.2$   & $0.77$  & $0.654$ & $10.5$ & $175$    & $28.3$ & $0.78$ & $0.442$  & $0.377$  & $0.360$  & $0.337$  \\
				q5      & $-39.6$ & $-3.49$ & $0.650$ & $39.6$ & $1{,}910$ & $68.0$ & $1.00$ & $<0.001$ & $<0.001$ & $<0.001$ & $0.002$  \\
				\bottomrule
			\end{tabular}
			\begin{tablenotes}[flushleft]
				\footnotesize
				\item \emph{Note:} $\widehat{SA}$, $\widehat{IAE}$, and $\widehat{SUP}$ are in annualized basis points; $\widehat{ISE}$ is in $\text{bp}^2$. $\widehat{SA}$ is the grid-integrated signed area. $t_{NW}$, $p^{NW}_{SA}$, and Rank $R^2$ are from the rank-area portfolio regression, with Newey--West 21-trading-day standard errors. $|\widehat{SA}|/\widehat{IAE}$ is the coherence ratio measuring directionality conditional on nontrivial magnitude; values near one indicate a mostly one-signed curve. $p^{HAC\text{-}GP}_{IAE}$, $p^{HAC\text{-}GP}_{ISE}$, and $p^{HAC\text{-}GP}_{SUP}$ are upper-tail probabilities from the HAC-Gaussian-process functional null with 50{,}000 simulation draws. All results evaluate the annual Jul--Jun rebalancing portfolio on daily factors.
			\end{tablenotes}
		\end{threeparttable}
	\end{table}
	
	\begin{figure}[!htbp]
		\centering
		\includegraphics[width=3.2in]{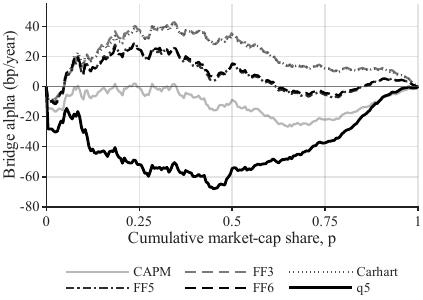}
		\caption{Annual Jul--Jun rebalancing, daily frequency: cap-axis bridge-alpha curve}
		\label{fig:daily-bridge}
		\begin{minipage}{\textwidth}
			\footnotesize
			\emph{Note:} Each candidate model's estimated bridge-alpha curve $\widehat{\alpha}_m(p)$ is plotted against the cumulative market-cap share $p\in[0,1]$. The vertical axis is in annualized basis points. The q5 curve reaches a minimum of about $-68$\,bp near $p\approx0.45$, stays in the negative region across the large- to mid-cap range, and closes to zero only at $p=1$, never changing sign along the cap-axis ($|\widehat{SA}|/\widehat{IAE}=1.00$). FF3 and Carhart instead form a positive one-signed curve, and FF5 and FF6 change sign near zero. The sample and estimation procedure are as in \Cref{tab:daily-bridge}.
		\end{minipage}
	\end{figure}
	
	\paragraph{Daily q5 distortion.}
	q5 is the only model with a strong daily cap-axis distortion under the formal zero-bridge-alpha inference. Its signed area is \(-39.6\) bp with a rank-area HAC statistic of \(-3.49\) and \(p^{NW}_{SA}<0.001\). The nonlinear functionals give the same conclusion: \(p^{HAC\text{-}GP}_{IAE}<0.001\), \(p^{HAC\text{-}GP}_{ISE}<0.001\), and \(p^{HAC\text{-}GP}_{SUP}=0.002\). The curve is also almost perfectly directional: \(|\widehat{SA}|/\widehat{IAE}=1.00\), with a local minimum near \(-68\) bp.
	
	\paragraph{Other daily models.}
	FF5, FF6, and CAPM do not reject cap-axis internal consistency in the main daily design. FF3 and Carhart form a positive bridge with the opposite sign to q5, but their status is weaker. FF3 is borderline, with \(p^{NW}_{SA}=0.053\) and \(p^{HAC\text{-}GP}_{IAE}=0.048\), while its \(ISE\) and \(SUP\) functionals are just above conventional levels. Carhart is similar but slightly weaker. I therefore treat these positive daily bridges as suggestive rather than central.
	
	\paragraph{Interpretation.}
	The daily result is axis-restricted. It does not imply that q5 fails to price the aggregate market, which it passes in \Cref{tab:market-sanity-check}, nor does it overturn mean--variance comparisons. It shows that, conditional on the aggregate-market gate, q5 leaves a one-sided pricing-error path along the cap-rank axis at the daily factor frequency. The next subsection asks whether this path survives at the monthly frequency.
	
	\FloatBarrier
	
	\subsection{Monthly results}
	\label{subsec:monthly-results}
	
	I next aggregate the same daily bridge payoffs to calendar months and regress them on the providers' distributed monthly factors. The sample has 696 months. The signed area is tested through the rank-area portfolio with a Newey--West 6-month lag, and the \(IAE\), \(ISE\), and \(SUP\) \(p\)-values are computed from the HAC-Gaussian-process functional null. \Cref{tab:monthly-bridge} reports the monthly functionals and inference statistics.
	
	\begin{table}[!htbp]
		\centering
		\singlespacing
		\caption{Annual Jul--Jun rebalancing, monthly frequency: cap-axis bridge functional statistics and inference}
		\label{tab:monthly-bridge}
		\footnotesize
		\setlength{\tabcolsep}{3pt}
		\begin{threeparttable}
			\begin{tabular}{lrrrrrrrrrrr}
				\toprule
				Model & $\widehat{SA}$ & $t_{NW}$ & Rank $R^2$ & $\widehat{IAE}$ & $\widehat{ISE}$ & $\widehat{SUP}$ & $\dfrac{|\widehat{SA}|}{\widehat{IAE}}$ & $p^{NW}_{SA}$ & $p^{HAC\text{-}GP}_{IAE}$ & $p^{HAC\text{-}GP}_{ISE}$ & $p^{HAC\text{-}GP}_{SUP}$ \\
				\midrule
				CAPM    & $11.6$ & $0.59$ & $0.091$ & $12.3$ & $213$    & $30.0$ & $0.94$ & $0.555$ & $0.582$ & $0.569$ & $0.505$ \\
				FF3     & $32.4$ & $2.81$ & $0.649$ & $32.5$ & $1{,}293$ & $57.1$ & $1.00$ & $0.005$ & $0.005$ & $0.008$ & $0.013$ \\
				Carhart & $32.8$ & $2.60$ & $0.649$ & $33.0$ & $1{,}315$ & $57.4$ & $1.00$ & $0.009$ & $0.010$ & $0.014$ & $0.023$ \\
				FF5     & $32.5$ & $2.87$ & $0.680$ & $32.5$ & $1{,}293$ & $55.3$ & $1.00$ & $0.004$ & $0.004$ & $0.006$ & $0.017$ \\
				FF6     & $32.4$ & $2.72$ & $0.680$ & $32.4$ & $1{,}281$ & $56.4$ & $1.00$ & $0.007$ & $0.007$ & $0.011$ & $0.026$ \\
				q5      & $4.4$  & $0.40$ & $0.677$ & $5.2$  & $37$     & $13.6$ & $0.83$ & $0.690$ & $0.870$ & $0.910$ & $0.958$ \\
				\bottomrule
			\end{tabular}
			\begin{tablenotes}[flushleft]
				\footnotesize
				\item \emph{Note:} Units and definitions are as in \Cref{tab:daily-bridge}. $t_{NW}$, $p^{NW}_{SA}$, and Rank $R^2$ are from the monthly rank-area portfolio regression, with Newey--West 6-month standard errors. $p^{HAC\text{-}GP}_{IAE}$, $p^{HAC\text{-}GP}_{ISE}$, and $p^{HAC\text{-}GP}_{SUP}$ are upper-tail probabilities from the HAC-Gaussian-process functional null with 50{,}000 simulation draws. All results evaluate the annual Jul--Jun rebalancing portfolio after aggregation to monthly returns and assessment on provider monthly factors.
			\end{tablenotes}
		\end{threeparttable}
	\end{table}
	
	\begin{figure}[!htbp]
		\centering
		\includegraphics[width=3.2in]{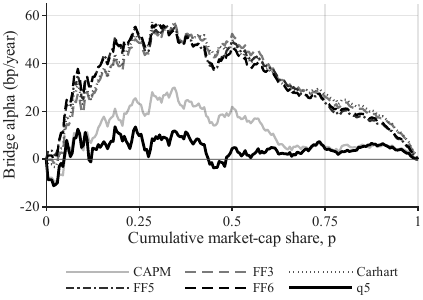}
		\caption{Annual Jul--Jun rebalancing, monthly frequency: cap-axis bridge-alpha curve}
		\label{fig:monthly-bridge}
		\begin{minipage}{\textwidth}
			\footnotesize
			\emph{Note:} Same format as \Cref{fig:daily-bridge} but at the monthly factor frequency. The FF3, FF5, FF6, and Carhart curves form a positive one-signed bulge reaching a maximum of about $+55$ to $+57$\,bp near $p\approx0.33$. The q5 curve stays closer to zero and changes sign, with $|\widehat{SA}|/\widehat{IAE}=0.83$. The sample and estimation procedure are as in \Cref{tab:monthly-bridge}.
		\end{minipage}
	\end{figure}
	
	\paragraph{Monthly attenuation of q5.}
	At the monthly frequency, q5's large negative daily signed area is nearly absent. The signed area is \(+4.4\) bp, the HAC statistic is \(0.40\), and \(p^{NW}_{SA}=0.690\). The nonlinear functionals also do not reject the zero bridge-alpha null: \(p^{HAC\text{-}GP}_{IAE}=0.870\), \(p^{HAC\text{-}GP}_{ISE}=0.910\), and \(p^{HAC\text{-}GP}_{SUP}=0.958\). Thus the directional daily q5 bridge does not persist at the monthly factor frequency.
	
	\paragraph{Monthly Fama--French and Carhart bridge.}
	The Fama--French and Carhart models display the opposite pattern. FF3, Carhart, FF5, and FF6 all have signed areas around \(+32\) bp, coherence ratios near one, HAC statistics between \(2.60\) and \(2.87\), and HAC-Gaussian-process functional \(p\)-values below 0.05. The cap-axis distortion that is weak or absent for these models daily becomes visible at the monthly factor frequency. CAPM has a small positive curve, but it does not reject the formal zero bridge-alpha null.
	
	\paragraph{Frequency interaction.}
	The main empirical message is a frequency-by-model interaction. q5's cap-axis distortion is concentrated at the daily frequency and attenuates monthly, while the Fama--French and Carhart distortions are stronger monthly. This asymmetry matters for interpretation: daily non-synchronous trading and short-horizon beta attenuation are natural candidate explanations for a daily bridge, so the q5 result requires the lead--lag correction in \Cref{subsec:robust-dimson}, but they are less natural explanations for a bridge that becomes clearer at the monthly frequency. The diagnostic therefore does not produce a single model ranking; it shows where, in what direction, and at which frequency cap-rank pricing errors appear.
	
	\FloatBarrier
	
	\section{The Cap--Axis Metric as a Zero-Alpha Factor Coordinate}
	\label{sec:coordinate}
	
	The preceding sections estimate cap-axis pricing errors at the model level. This section uses the same object as a factor-level coordinate. For each candidate factor, I compare its maximum-Sharpe contribution with the cap-axis footprint left by the two-factor model that adds the factor to the market. The resulting map shows that mean--variance contribution and cap-rank zero-alpha consistency order factors differently. Across 155 factors, their rank association is slightly negative, while the positive linear association is driven primarily by a single upper-tail observation. The cap-axis footprint is also only weakly explained by conventional size-factor exposure.
	
	\subsection{Construction}
	\label{subsec:frontier-construction}
	
	For each candidate factor \(g_j\), I estimate a two-factor model containing the market and \(g_j\):
	\begin{equation}
		D_t(p)
		=
		\alpha_j(p)
		+
		\beta_{j,M}(p)R^e_{M,t}
		+
		\beta_{j,g}(p)g_{j,t}
		+
		\varepsilon_{j,t}(p).
		\label{eq:capm-plus-one-frontier}
	\end{equation}
	The factor's zero-alpha footprint is the bridge-alpha curve \(p\mapsto\alpha_j(p)\), summarized by \(\widehat{SA}_j\), \(\widehat{IAE}_j\), \(\widehat{ISE}_j\), and \(\widehat{SUP}_j\). I compare these statistics with the annualized maximum-Sharpe-ratio gain from adding \(g_j\) to the market, denoted \(\Delta SR_j\).
	
	The candidate set is the monthly factor library of \citet{JKP23}, accessed through Global Factor Data \citep{GFD}, augmented with the Fama--French/Carhart and q5 factors used elsewhere in the paper. Requiring complete coverage of the January 1967--December 2024 common sample leaves 155 factors: 145 anomaly-library factors, six Fama--French/Carhart factors, and four q5 nonmarket factors. Every factor is evaluated over the same 696 months, using the monthly-rebalanced cap-axis bridge constructed from the daily path and summed to calendar months.
	
	I read the scatter as a descriptive factor map, not as 155 separate hypothesis tests. The magnitude functionals contain nearly the same ordinal information: the Spearman correlation between \(\widehat{IAE}\) and \(\widehat{ISE}\) is \(0.996\). I use \(\widehat{IAE}\) in the main figure because it is measured in annualized basis points.
	
	\subsection{Sharpe gain and zero-alpha footprint}
	\label{subsec:frontier-sharpe}
	
	\begin{figure}[!htbp]
		\centering
		\includegraphics[width=0.62\linewidth]{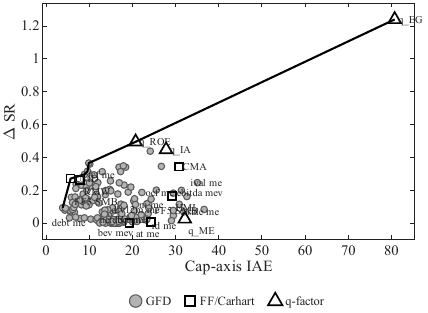}
		\caption{Cap-axis footprint versus Sharpe-ratio gain across the factor universe}
		\label{fig:frontier-iae}
		\begin{minipage}{0.95\textwidth}
			\footnotesize
			\emph{Note:} Each point is one candidate factor added to the market over the common January 1967--December 2024 monthly sample. The horizontal axis is the integrated absolute error \(\widehat{IAE}_j\) of the cap-axis bridge-alpha curve of the two-factor model \(\{R^e_M,g_j\}\), in annualized basis points. The vertical axis is the annualized maximum-Sharpe-ratio gain \(\Delta SR_j\) from adding \(g_j\) to the market. The cap-axis bridge is monthly rebalanced. Open circles are Global Factor Data candidates, open squares are Fama--French/Carhart factors, and filled triangles are q5 factors. The solid line traces the upper-left frontier: among candidate factors with no larger cap-axis footprint, it records the largest observed Sharpe-ratio gain.
		\end{minipage}
	\end{figure}
	
	\Cref{fig:frontier-iae} shows that the cap-axis footprint and the Sharpe coordinate order factors differently. Across the 155 factors, the Spearman correlation between \(\Delta SR\) and \(\widehat{IAE}\) is \(-0.079\). The Pearson correlation is \(0.388\), but this positive linear relation is concentrated in the upper tail. q5 expected growth is the clearest case: it has both the largest Sharpe gain, \(\Delta SR=1.24\), and the largest footprint, \(\widehat{IAE}=80.7\) bp. Excluding this observation reduces the Pearson correlation to \(0.016\), while the Spearman correlation remains slightly negative at \(-0.100\).
	
	The high-Sharpe, low-footprint region is well populated: 45 factors lie above the median \(\Delta SR\) while remaining below the median \(\widehat{IAE}\). The scatter therefore does not reveal a general tradeoff in which greater mean--variance contribution requires greater cap-axis distortion. The narrower conclusion is that the two coordinates are distinct: \(\Delta SR_j\) measures what a factor adds to the attainable frontier, whereas \(\widehat{IAE}_j\) measures the pricing-error curve left on a fixed market-internal subspace when the same factor is added to the market.
	
	\FloatBarrier
	
	\subsection{The footprint is not size exposure}
	\label{subsec:frontier-size}
	
	Because the axis is ordered by market capitalization, a natural concern is that the cap-axis footprint merely relabels exposure to a conventional size factor. I test this using the Fama--French three-factor SMB as the traded size proxy. For every candidate factor \(g_j\), I estimate
	\begin{equation}
		g_{j,t}
		=
		a_j
		+
		b_{j,M}R^e_{M,t}
		+
		\gamma_{j,size}SMB^{FF3}_t
		+
		u_{j,t},
		\label{eq:size-loading-audit}
	\end{equation}
	and ask whether \(\lvert\widehat{\gamma}_{j,size}\rvert\) explains the cross-section of cap-axis footprints.
	
	It does not. Regressing \(\widehat{IAE}\) on the absolute size loading gives \(R^2=0.019\), and the rank correlation between \(\lvert\widehat{\gamma}_{size}\rvert\) and \(\widehat{IAE}\) is \(-0.252\). The explanatory power remains small for the other norms: the size-loading \(R^2\) is \(0.003\) for \(\widehat{ISE}\), \(0.024\) for \(\widehat{SUP}\), and \(0.038\) for \(\lvert\widehat{SA}\rvert\). Residualizing the norms with respect to the absolute FF3 SMB loading leaves their rank relations with \(\Delta SR\) essentially unchanged.
	
	\begin{table}[!htbp]
		\centering
		\singlespacing
		\caption{Size-controlled nonreducibility of the cap-axis footprint}
		\label{tab:size-nonreducibility}
		\footnotesize
		\setlength{\tabcolsep}{4.5pt}
		
		\begin{adjustbox}{max width=\textwidth}
			\begin{tabular}{lrrrrr}
				\toprule
				Norm
				& Pearson with \(\Delta SR\)
				& Spearman with \(\Delta SR\)
				& High-SR/low-error count
				& Size-loading \(R^2\)
				& Size-resid.\ Spearman \\
				\midrule
				\(\widehat{ISE}\)  & \(0.582\) & \(-0.055\) & \(42\) & \(0.003\) & \(-0.058\) \\
				\(\widehat{IAE}\)  & \(0.388\) & \(-0.079\) & \(45\) & \(0.019\) & \(-0.074\) \\
				\(\widehat{SUP}\)  & \(0.338\) & \(-0.099\) & \(44\) & \(0.024\) & \(-0.102\) \\
				\(\lvert\widehat{SA}\rvert\) & \(0.357\) & \(-0.085\) & \(45\) & \(0.038\) & \(-0.081\) \\
				\bottomrule
			\end{tabular}
		\end{adjustbox}
		
		\vspace{0.35em}
		\begin{minipage}{\textwidth}
			\footnotesize
			\emph{Note:} The table reports the 155-factor monthly scan. High-SR/low-error count denotes factors above the median \(\Delta SR\) and below the median reported norm. Size-loading \(R^2\) is the cross-sectional explanatory power of the absolute FF3 SMB loading \(\lvert\widehat{\gamma}_{j,\mathrm{size}}\rvert\) from \Cref{eq:size-loading-audit}. Size-resid.\ Spearman is the Spearman correlation between \(\Delta SR\) and the size-residualized norm. The scan uses a monthly-rebalanced cap-axis bridge over the common 1967--2024 sample.
		\end{minipage}
	\end{table}
	
	\Cref{tab:size-nonreducibility} summarizes the audit. The positive Pearson correlations are substantially larger than the rank correlations, reflecting a small number of joint upper-tail observations rather than a monotone relation, and controlling for FF3 SMB exposure does not restore such a relation. A factor can load strongly on conventional size without producing the largest cap-axis footprint, while a factor with modest size loading can leave a large bridge-alpha curve.
	
	\Cref{tab:frontier-extremes} illustrates this distinction at the factor level. q5 expected growth has the largest footprint, \(\widehat{IAE}=80.7\) bp, despite a market-controlled FF3 SMB loading of only \(-0.24\). By contrast, q5 ME and FF5 SMB load \(0.95\) and \(0.98\) on FF3 SMB but leave much smaller footprints. The smallest-footprint group is also heterogeneous in size exposure: its loadings range from approximately zero for gross profitability to \(0.74\) for price.
	
	\begin{table}[!htbp]
		\centering
		\singlespacing
		\caption{Cap-axis footprint, Sharpe gain, and FF3 SMB loading: selected extremes}
		\label{tab:frontier-extremes}
		\footnotesize
		\setlength{\tabcolsep}{4pt}
		\begin{threeparttable}
			\begin{tabular}{llrrr}
				\toprule
				Factor & Family & \(\Delta SR\) & \(\widehat{IAE}\) (bp) & Size \(\beta\) \\
				\midrule
				\multicolumn{5}{l}{\emph{Panel A. Largest cap-axis footprints}} \\
				q5 EG          & q5  & \(1.24\) & \(80.7\) & \(-0.24\) \\
				corr 1260d     & GFD & \(0.08\) & \(36.6\) & \(0.43\) \\
				seas 6--10na   & GFD & \(0.25\) & \(35.0\) & \(-0.03\) \\
				ncoa gr1a      & GFD & \(0.16\) & \(32.6\) & \(0.14\) \\
				q5 ME          & q5  & \(0.03\) & \(32.1\) & \(0.95\) \\
				sale me        & GFD & \(0.07\) & \(31.9\) & \(0.13\) \\
				nncoa gr1a     & GFD & \(0.20\) & \(31.3\) & \(0.08\) \\
				ival me        & GFD & \(0.12\) & \(30.9\) & \(0.02\) \\
				\addlinespace
				\multicolumn{5}{l}{\emph{Panel B. Smallest cap-axis footprints}} \\
				op at          & GFD & \(0.09\) & \(3.8\) & \(-0.17\) \\
				niq at         & GFD & \(0.08\) & \(4.3\) & \(-0.29\) \\
				ope bel1       & GFD & \(0.09\) & \(4.5\) & \(-0.32\) \\
				ni be          & GFD & \(0.08\) & \(5.4\) & \(-0.46\) \\
				prc            & GFD & \(0.03\) & \(5.4\) & \(0.74\) \\
				gp at          & GFD & \(0.07\) & \(5.5\) & \(-0.01\) \\
				\addlinespace
				\multicolumn{5}{l}{\emph{Memo: size factors}} \\
				FF3 SMB        & FF  & \(0.00\) & \(19.3\) & -- \\
				FF5 SMB        & FF  & \(0.01\) & \(24.2\) & \(0.98\) \\
				\bottomrule
			\end{tabular}
			
			\begin{tablenotes}[flushleft]
				\footnotesize
				\item \emph{Note:} \(\Delta SR\) is the annualized maximum-Sharpe-ratio gain from adding the factor to the market. \(\widehat{IAE}\) is the integrated absolute error of the cap-axis bridge-alpha curve of \(\{R^e_M,g_j\}\), in annualized basis points. Size \(\beta\) is the factor's market-controlled loading on FF3 SMB from \Cref{eq:size-loading-audit}; it is omitted for FF3 SMB because that factor is the benchmark size proxy. Panels A and B report the eight largest and six smallest \(\widehat{IAE}\) observations, respectively.
			\end{tablenotes}
		\end{threeparttable}
	\end{table}
	
	The size-nonreducibility claim concerns curve magnitude and shape, not signed direction. Because \(\widehat{SA}\) preserves directional tilt along the cap axis, it is more closely related to signed size exposure: \(\mathrm{Spearman}(\widehat{SA},\widehat{\gamma}_{size})=0.45\). This moderate relation is expected but does not extend to the magnitude norms. The signed area records whether the curve tilts toward the large- or small-cap side, whereas \(\widehat{IAE}\), \(\widehat{ISE}\), and \(\widehat{SUP}\) measure the magnitude and shape of the remaining pricing-error curve.
	
	\FloatBarrier
	
	\subsection{Model-level cancellation in factor space}
	\label{subsec:frontier-cancellation}
	
	The factor-coordinate map also illustrates model-level cancellation. Under the same monthly-rebalanced implementation used in the scan, the full q5 model has a small cap-axis footprint, with \(\widehat{IAE}=4.4\) bp and \(\widehat{ISE}=29.6\). Yet its constituent factors can have large marginal footprints when added to the market one at a time: expected growth has \(\widehat{IAE}=80.7\) bp, size \(32.1\) bp, investment \(27.8\) bp, and profitability \(20.7\) bp.
	
	This comparison does not decompose the full q5 model. The scan measures each factor's marginal cap-axis footprint when added to the market, not its contribution conditional on the remaining q5 factors. It nevertheless shows that model-level consistency need not arise from cap-axis-neutral components. A factor model can combine individually large marginal footprints and still leave a small assembled bridge-alpha curve through offsetting factor exposures. The cap-axis diagnostic therefore characterizes both individual-factor footprints and the residual geometry of the assembled model, which need not coincide.
	
	\section{Robustness}
	\label{sec:robust}
	
	This section checks whether the frequency-by-model patterns are artifacts of implementation choices or finite-sample behavior. The exercises have distinct roles. A size-and-power calibration evaluates the operating characteristics of the nonlinear functional tests. Rebalancing checks test whether the results depend on the annual Jul--Jun formation cycle. Lead--lag corrections ask whether daily bridge alphas, especially q5's negative daily signed area, are absorbed by short-horizon beta attenuation.
	
	The robustness analysis does not replace the formal inference in the main tables. The main \(p\)-values continue to come from the HAC-Gaussian-process null for the bridge-alpha functionals. The checks below instead ask whether those tests have approximately correct finite-sample size and increasing power, whether the formation convention matters, and whether high-frequency timing effects matter.
	
	\subsection{Finite-sample size and power calibration}
	\label{subsec:robust-size-power}
	
	I evaluate the nonlinear functional tests in a simulation calibrated to the empirical monthly bridge panel. For each model, I hold the factor matrix fixed and resample the centered residual curve in six-month circular blocks, preserving contemporaneous dependence across the 201 cap-axis grid points and short-run time-series dependence. The zero-curve design removes the estimated intercept function. Structured alternatives then add \(\lambda\widehat{\alpha}_m(p)\), where \(\widehat{\alpha}_m(p)\) is model \(m\)'s estimated annual-Jul bridge-alpha curve and \(\lambda\in\{0,0.25,0.50,0.75,1.00\}\). Thus \(\lambda=0\) measures empirical size, while positive values trace power as a model-specific cap-axis pricing-error curve is introduced progressively.
	
	For each simulated panel, I re-estimate the bridge regression and recompute \(\widehat{IAE}\), \(\widehat{ISE}\), and \(\widehat{SUP}\). \Cref{tab:robust-size-power} reports 5\% rejection frequencies averaged across CAPM, FF3, Carhart, FF5, FF6, and q5 and across the three nonlinear functionals. I report the baseline HAC-Gaussian-process critical values together with a block-bootstrap calibration as a finite-sample cross-check.
	
	\begin{table}[!htbp]
		\centering
		\singlespacing
		\caption{Finite-sample size and power of the cap-axis functional tests}
		\label{tab:robust-size-power}
		\footnotesize
		\setlength{\tabcolsep}{6pt}
		\begin{threeparttable}
			\begin{tabular}{lrrrrr}
				\toprule
				Critical values
				& \(\lambda=0\)
				& \(\lambda=0.25\)
				& \(\lambda=0.50\)
				& \(\lambda=0.75\)
				& \(\lambda=1.00\) \\
				\midrule
				HAC-GP          & 4.0\% & 7.3\% & 18.3\% & 36.1\% & 53.9\% \\
				Block bootstrap & 4.9\% & 8.7\% & 20.8\% & 39.2\% & 56.4\% \\
				\bottomrule
			\end{tabular}
			\begin{tablenotes}[flushleft]
				\footnotesize
				\item \emph{Note:} The table reports rejection frequencies averaged across six models and three nonlinear functionals, \(IAE\), \(ISE\), and \(SUP\). The nominal level is 5\%. The case \(\lambda=0\) imposes the zero-curve null; positive values scale each model's estimated cap-axis alpha curve. HAC-GP critical values use 50{,}000 Gaussian draws from the estimated finite-grid HAC covariance process. Block-bootstrap critical values use 4{,}999 resamples with six-month circular blocks; their critical-value and evaluation draws are independent. The calibration uses the annual Jul--Jun bridge over the 1967--2024 monthly sample and 4{,}999 evaluation draws.
			\end{tablenotes}
		\end{threeparttable}
	\end{table}
	
	At \(\lambda=0\), rejection rates are close to the nominal 5\% level: 4.0\% under HAC-GP critical values and 4.9\% under the block bootstrap. Rejection rises monotonically with alternative strength, reaching 53.9\% and 56.4\%, respectively, at \(\lambda=1\). This average masks economically informative heterogeneity: power is high for the pronounced Fama--French and Carhart curves and low for the much smaller CAPM and q5 curves. The two critical-value procedures nevertheless produce similar profiles throughout. The nonlinear bridge functionals are therefore approximately size-controlled under the zero-curve null and respond progressively to structured cap-rank pricing errors.
	
	\FloatBarrier
	
	\subsection{Dependence on the rebalancing cycle}
	\label{subsec:robust-rebalancing}
	
	The main design uses annual Jul--Jun formation. I vary the formation cycle to daily, monthly, quarterly, and annual-Jan schedules. The goal is not to search for the strongest rejection, but to check whether the frequency pattern depends on one rebalancing convention.
	
	The pattern is stable. Under daily factor evaluation, q5 has the most negative signed area across formation cycles, while FF5 and FF6 remain closer to zero. Under monthly factor evaluation, FF3, Carhart, FF5, and FF6 display positive bridges, while q5 remains close to zero. The daily--monthly reversal is therefore not an artifact of annual Jul--Jun formation.
	
	\Cref{fig:rebal-compare} shows the result for q5 and FF5. q5 curves are negative at the daily factor frequency and close to zero monthly. FF5 is small daily and positive monthly. The vertical-axis range differs across panels, so the figure should be read for shape and sign rather than exact cross-panel magnitudes.
	
	\begin{figure}[!htbp]
		\centering
		\begin{subfigure}[t]{0.48\linewidth}
			\centering
			\includegraphics[width=\linewidth]{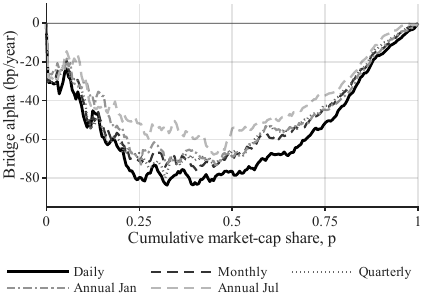}
			\caption{q5, daily frequency}
			\label{fig:rebal-q5-daily}
		\end{subfigure}
		\hfill
		\begin{subfigure}[t]{0.48\linewidth}
			\centering
			\includegraphics[width=\linewidth]{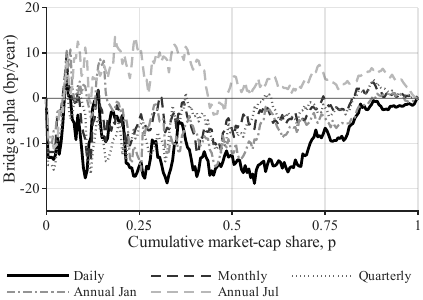}
			\caption{q5, monthly frequency}
			\label{fig:rebal-q5-monthly}
		\end{subfigure}
		
		\vspace{0.6em}
		
		\begin{subfigure}[t]{0.48\linewidth}
			\centering
			\includegraphics[width=\linewidth]{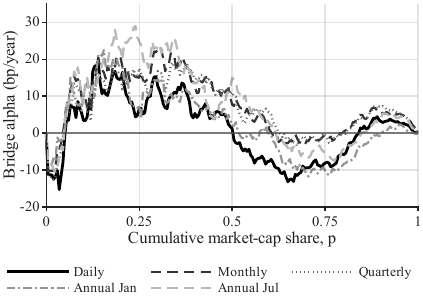}
			\caption{FF5, daily frequency}
			\label{fig:rebal-ff5-daily}
		\end{subfigure}
		\hfill
		\begin{subfigure}[t]{0.48\linewidth}
			\centering
			\includegraphics[width=\linewidth]{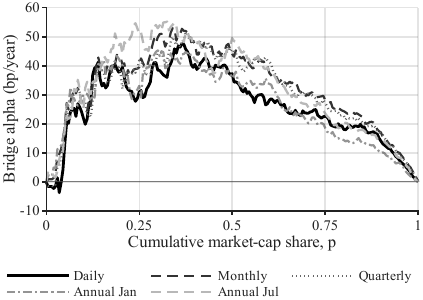}
			\caption{FF5, monthly frequency}
			\label{fig:rebal-ff5-monthly}
		\end{subfigure}
		
		\caption{Cap-axis bridge-alpha curves by formation cycle}
		\label{fig:rebal-compare}
		\begin{minipage}{\linewidth}
			\footnotesize
			\emph{Note:} Each panel plots the bridge-alpha curve, annualized in basis points, against cumulative market-cap share \(p\in[0,1]\), overlaying five formation cycles: daily, monthly, quarterly, annual Jan, and annual Jul. The top row is q5 and the bottom row is FF5. The left column uses daily factors and the right column uses monthly factors. The vertical-axis range differs across panels.
		\end{minipage}
	\end{figure}
	
	Magnitude varies with formation frequency. For q5 under daily factor evaluation, the negative signed area grows as the formation cycle shortens, from \(-39.6\) bp under annual Jul--Jun formation to \(-55.2\) bp under daily formation. This is consistent with a high-frequency component amplified when the cap-rank order and small-stock boundary are updated more often. Alternative grids and HAC lags do not change the ranking of the main functionals.
	
	\FloatBarrier
	
	\subsection{Non-synchronous trading and the lead--lag correction}
	\label{subsec:robust-dimson}
	
	The daily q5 bridge is the result most exposed to a non-synchronous-trading interpretation. If small or illiquid stocks react to factors with a lag, daily betas can be attenuated and bridge alphas can appear. I therefore apply a Dimson-style lead--lag correction, closely related to the Scholes--Williams adjustment.
	
	For each model, I augment the bridge regression with factor leads and lags over a symmetric window \([-k,+k]\). The corrected exposure is the sum of coefficients across the window, and the corrected alpha is the intercept of the extended regression. I report \(k\in\{1,2,3,5,8,10,13,21\}\). The naive estimate is recomputed on each \(k\)-trimmed sample, so changes across columns reflect the correction rather than sample dropout.
	
	\Cref{tab:dimson-gate} reports the horizon profile for the main annual Jul--Jun bridge. The table should be read as a frequency gate, not as a single pass/fail test.
	
	\begin{table}[!htbp]
		\centering
		\singlespacing
		\caption{Horizon profile of the lead--lag-corrected rank-area signed area (annual Jul--Jun rebalancing, ann bp)}
		\label{tab:dimson-gate}
		\footnotesize
		\setlength{\tabcolsep}{3.4pt}
		\begin{threeparttable}
			\begin{tabular}{lrrrrrrrrr}
				\toprule
				Model & Naive & \(k{=}1\) & \(k{=}2\) & \(k{=}3\) & \(k{=}5\) & \(k{=}8\) & \(k{=}10\) & \(k{=}13\) & \(k{=}21\) \\
				\midrule
				CAPM    & \(-10\) & \(1\) & \(5\) & \(8\) & \(13\) & \(15\) & \(15\) & \(17\) & \(20\) \\
				FF3     & \(21\) & \(27^{*}\) & \(27^{*}\) & \(27^{*}\) & \(27^{*}\) & \(26^{*}\) & \(26^{*}\) & \(25^{*}\) & \(22^{*}\) \\
				Carhart & \(21\) & \(28^{*}\) & \(27^{*}\) & \(27^{*}\) & \(27^{*}\) & \(24^{*}\) & \(24^{*}\) & \(24^{*}\) & \(21\) \\
				FF5     & \(7\) & \(18\) & \(20^{*}\) & \(20\) & \(19\) & \(23^{*}\) & \(26^{*}\) & \(28^{*}\) & \(28^{*}\) \\
				FF6     & \(8\) & \(19\) & \(21^{*}\) & \(20\) & \(20\) & \(22^{*}\) & \(24^{*}\) & \(27^{*}\) & \(26^{*}\) \\
				q5      & \(-38^{***}\) & \(-23^{*}\) & \(-15\) & \(-14\) & \(-9\) & \(-6\) & \(-6\) & \(-1\) & \(6\) \\
				\bottomrule
			\end{tabular}
			\begin{tablenotes}[flushleft]
				\footnotesize
				\item \emph{Note:} Each cell is the lead--lag-corrected rank-area signed area, annualized in basis points and rounded to integers. The Naive column is the uncorrected estimate on the same trimmed sample. \(^{*}\), \(^{**}\), and \(^{***}\) denote 5\%, 1\%, and 0.1\% significance by the Newey--West 21-trading-day \(t\)-statistic. \(k\) is the one-sided number of trading days in the symmetric lead--lag window. The sample is annual Jul--Jun rebalancing, 1967--2024 common sample, with 14,556--14,596 trading days depending on the per-\(k\) trim.
			\end{tablenotes}
		\end{threeparttable}
	\end{table}
	
	The q5 signed area attenuates rapidly. The naive estimate is \(-38\) bp and strongly significant. With one lead and lag it falls to \(-23\) bp; from \(k=2\) onward it is no longer significant; by \(k=21\), it is \(+6\) bp. Most attenuation therefore occurs within the first few leads and lags, where a non-synchronous-trading interpretation is most plausible.
	
	The other models have different profiles. FF3 remains positive and significant through \(k=21\), Carhart remains significant through \(k=13\), and FF5 and FF6 become positive at wider windows. The correction therefore does not mechanically shrink every bridge toward zero. It specifically absorbs the negative daily q5 signed area while leaving the Fama--French-family positive bridge intact or stronger.
	
	\paragraph{A daily-rebalancing worst-case check.}
	Daily formation re-sorts the cap-axis order every trading day and updates the small-stock boundary most frequently. It is therefore a worst-case design for non-synchronous-trading exposure. I apply the same lead--lag correction to the daily-rebalancing bridge in \Cref{tab:dimson-gate-daily}.
	
	\begin{table}[!htbp]
		\centering
		\singlespacing
		\caption{Horizon profile of the lead--lag-corrected rank-area signed area (daily rebalancing, ann bp)}
		\label{tab:dimson-gate-daily}
		\footnotesize
		\setlength{\tabcolsep}{3.4pt}
		\begin{threeparttable}
			\begin{tabular}{lrrrrrrrrr}
				\toprule
				Model & Naive & \(k{=}1\) & \(k{=}2\) & \(k{=}3\) & \(k{=}5\) & \(k{=}8\) & \(k{=}10\) & \(k{=}13\) & \(k{=}21\) \\
				\midrule
				CAPM    & \(-13\) & \(-2\) & \(2\) & \(6\) & \(11\) & \(13\) & \(14\) & \(16\) & \(17\) \\
				FF3     & \(18\) & \(25^{*}\) & \(25^{*}\) & \(25^{*}\) & \(25^{*}\) & \(24^{*}\) & \(24^{*}\) & \(22^{*}\) & \(19\) \\
				Carhart & \(9\) & \(15\) & \(16\) & \(16\) & \(16\) & \(13\) & \(13\) & \(12\) & \(9\) \\
				FF5     & \(1\) & \(11\) & \(13\) & \(14\) & \(12\) & \(18\) & \(21\) & \(24^{*}\) & \(24^{*}\) \\
				FF6     & \(-5\) & \(4\) & \(7\) & \(7\) & \(6\) & \(9\) & \(12\) & \(14\) & \(13\) \\
				q5      & \(-54^{***}\) & \(-41^{***}\) & \(-33^{**}\) & \(-30^{**}\) & \(-25^{*}\) & \(-17\) & \(-17\) & \(-11\) & \(-3\) \\
				\bottomrule
			\end{tabular}
			\begin{tablenotes}[flushleft]
				\footnotesize
				\item \emph{Note:} Each cell is the lead--lag-corrected rank-area signed area, annualized in basis points and rounded to integers. The Naive column is the uncorrected estimate on the same trimmed sample. \(^{*}\), \(^{**}\), and \(^{***}\) denote 5\%, 1\%, and 0.1\% significance by the Newey--West 21-trading-day \(t\)-statistic. \(k\) is the one-sided number of trading days in the symmetric lead--lag window. \(k=21\) is about one trading month. The sample is daily rebalancing, 1967--2024 common sample.
			\end{tablenotes}
		\end{threeparttable}
	\end{table}
	
	The daily-rebalancing profile moves in the same direction but starts from a larger negative value. q5's naive signed area is \(-54\) bp, remains significant through \(k=5\), becomes insignificant at wider windows, and reaches \(-3\) bp by \(k=21\). This is the expected pattern in the worst-case design: the distortion is stronger when the cap-rank boundary is updated daily, but it still disappears over a one-month lead--lag window.
	
	Together, the lead--lag exercises localize the daily q5 bridge in a short-horizon window. The same correction does not absorb the Fama--French-family positive bridge, which appears more clearly at the monthly frequency. The robustness evidence therefore supports the paper's main interpretation: the diagnostic measures where along the cap-rank axis, and at which horizon, each model leaves pricing error.
	
	\FloatBarrier
	
	\section{High-Resolution Zero-Alpha Diagnostics versus Size-Bin Tests}
	\label{sec:whybetter}
	
	The cap-axis diagnostic does not replace tests on size-sorted portfolios. A conventional size-bin test asks whether a finite set of fully invested portfolios has nonzero alphas, whereas the cap-axis diagnostic studies a nested family of zero-investment prefix-versus-market bridges. The two procedures use the same economic ordering but impose different test-asset coordinates.
	
	The distinction is resolution. A bin test assigns one alpha to each prechosen portfolio and can therefore compress variation within a bin. The cap-axis diagnostic estimates the path \(p\mapsto\alpha_m(p)\) over cumulative market-capitalization share, and its functionals summarize the direction, magnitude, concentration, and local extremity of the resulting bridge-alpha curve.
	
	This comparison should not be interpreted as a direct comparison between raw size-bin alphas and bridge alphas. A fully invested decile return and an exposure-scaled zero-investment bridge are different test assets. The relevant apples-to-apples comparison is between the full cap-axis bridge and a coarser bridge observed only at size-bin boundaries.
	
	\subsection{Market-value concentration inside size bins}
	\label{subsec:concentration}
	
	The resolution loss is especially severe for NYSE-breakpoint size deciles because market value is concentrated in the largest bin. \Cref{tab:decile-concentration} reports the average share of aggregate investable market capitalization held by each decile from 1967 to 2024. The top decile holds \(60.3\%\) of market value on average, with a median of \(60.7\%\) and a maximum of \(76.4\%\). The top two deciles together hold \(73.7\%\), and the top three hold \(81.4\%\). Most of the market's economic weight is therefore represented by only a few bin-level coordinates.
	
	The dominant bin is also internally concentrated. Although the top decile contains 160 stocks on average, its effective number of constituents, measured by the inverse Herfindahl of within-decile capitalization weights, is approximately 65, and its five largest firms alone account for \(20.3\%\) of the entire market. The top decile is therefore not a homogeneous large-stock block but a steeply weighted submarket with substantial internal cap-rank structure.
	
	\begin{table}[!ht]
		\singlespacing
		\centering
		\footnotesize
		\caption{Concentration of market value across NYSE-breakpoint size deciles}
		\label{tab:decile-concentration}
		\begin{tabular}{lccccc}
			\toprule
			Decile & Mean cap-share & Median & Max & Mean \# stocks & Effective \# \\
			\midrule
			D01 (small) & 1.6\% & 1.6\% & 3.8\% & 1{,}938 & 1{,}328 \\
			D02 & 1.5\% & 1.5\% & 2.8\% & 586 & 563 \\
			D03 & 1.8\% & 1.7\% & 3.1\% & 389 & 381 \\
			D04 & 2.2\% & 2.2\% & 3.7\% & 307 & 302 \\
			D05 & 2.8\% & 2.7\% & 4.6\% & 254 & 250 \\
			D06 & 3.6\% & 3.5\% & 5.4\% & 214 & 211 \\
			D07 & 5.0\% & 5.0\% & 7.2\% & 194 & 190 \\
			D08 & 7.7\% & 7.9\% & 10.0\% & 179 & 174 \\
			D09 & 13.4\% & 13.6\% & 15.8\% & 166 & 158 \\
			D10 (large) & \textbf{60.3\%} & 60.7\% & 76.4\% & 160 & \textbf{65} \\
			\bottomrule
		\end{tabular}
		
		\par\vspace{4pt}
		\begin{minipage}{0.92\textwidth}
			\footnotesize
			\emph{Note:} Each row reports time-series statistics over the 696 months from January 1967 to December 2024. NYSE size breakpoints are applied to the full investable universe. Cap-share is the decile's share of aggregate investable market capitalization. ``Effective \#'' is the inverse Herfindahl index of within-decile capitalization weights. The top decile holds a median \(60.7\%\) of market value and has an effective constituent count of approximately 65 despite containing 160 stocks on average.
		\end{minipage}
	\end{table}
	
	\subsection{A comparable coarse bridge representation}
	\label{subsec:external}
	
	A decile partition can be translated into the same zero-investment bridge units as the cap-axis diagnostic. Let \(\omega_{q,t}\) be size decile \(q\)'s share of aggregate market capitalization and let \(R_{q,t}\) be its value-weighted return. For the set \(\mathcal Q_k\) containing the \(k\) largest deciles, define cumulative exposure
	\begin{equation}
		s^{(10)}_{k,t}
		=
		\sum_{q\in\mathcal Q_k}\omega_{q,t}
		\label{eq:decile_bridge_share}
	\end{equation}
	and the corresponding decile-boundary bridge
	\begin{equation}
		D^{(10)}_{k,t}
		=
		\sum_{q\in\mathcal Q_k}
		\omega_{q,t}R_{q,t}
		-
		s^{(10)}_{k,t}R^M_t.
		\label{eq:decile_boundary_bridge}
	\end{equation}
	This object has the same zero-investment and equal-exposure form as the stock-level cap-axis bridge, but it observes the cumulative bridge only after each complete decile has entered the prefix.
	
	A ten-bin partition therefore supplies nine nontrivial internal bridge boundaries, plus the two zero endpoints. The baseline cap-axis design evaluates 201 target locations. At its time-series mean cap share, the top decile spans approximately \(p\in[0,0.60]\), so the baseline grid evaluates this economically dominant region at more than one hundred target locations. Some adjacent targets can map to the same whole-stock prefix, but the grid still reveals substantially more of the internal path than the single boundary at the end of the top decile.
	
	This formulation also clarifies why raw decile alphas should not be overlaid directly with bridge alphas. A raw decile alpha evaluates a fully invested bin portfolio; Equation \eqref{eq:decile_boundary_bridge} instead converts the decile partition into a coarse version of the same equal-exposure bridge tested by the cap-axis diagnostic.
	
	\subsection{Resolution as a functional diagnostic}
	\label{subsec:resolution}
	
	The gain from the cap-axis design is localization. A coarse boundary test can show that a cumulative large-stock region leaves a nonzero bridge alpha; the full curve additionally shows where within that region the error emerges, whether it changes sign, and whether it later reverses or cancels.
	
	The main results illustrate this distinction. q5's daily bridge reaches its deepest point near \(p\approx0.45\), well inside the region occupied on average by the top size decile. The monthly Fama--French curves peak near \(p\approx0.33\), also inside that dominant region. A decile-boundary bridge observes the cumulative outcome only after the entire top decile has entered the prefix and cannot localize either interior extremum.
	
	High resolution does not mean conducting 201 unrelated pointwise tests. The empirical object is one jointly estimated alpha vector, and inference uses its cross-grid HAC covariance structure. \(SA\), \(IAE\), \(ISE\), and \(SUP\) summarize the path while accounting for dependence across grid locations. Resolution is therefore increased at the construction stage without reducing the analysis to a collection of unadjusted pointwise \(t\)-statistics.
	
	The methodological advantage is not simply a greater tendency to reject. Size-bin tests remain useful for evaluating familiar investable portfolios and can detect economically important violations in their own right. The cap-axis diagnostic answers a complementary question: how an equal-exposure pricing-error bridge evolves as cumulative market capitalization is traversed. Its contribution is to replace a small number of cap-rank boundaries with a functional map of the same economic coordinate.
	
	\section{Discussion}
	\label{sec:discuss}
	
	\paragraph{What the diagnostic identifies.}
	The cap-axis diagnostic measures an intermediate object between aggregate-market fit and full SDF validity. Its test assets are equal-exposure bridges generated by moving through the market in capitalization order. A zero bridge-alpha vector means that the model leaves no alpha on this family; together with a zero aggregate-market alpha, linearity extends the restriction to the finite return space \(\mathcal{V}_{cap,\mathcal{P}}\) generated by the market and the bridges (\Cref{prop:cap_axis_sufficiency}).
	
	This interpretation is narrower than pricing every raw size portfolio. Because the empirical bridge uses whole stocks, its realized exposure \(s_t(p)\) varies as portfolio weights evolve, so zero alpha on the bridge and market does not mechanically imply zero alpha on the unscaled prefix return. The diagnostic instead asks whether each capitalization prefix differs systematically from an equal-exposure position in the whole market.
	
	The functionals distinguish different features of the curve: \(SA\) measures directional tilt; \(IAE\) measures magnitude without sign cancellation; \(ISE\) emphasizes concentrated distortions; and \(SUP\) records the largest local error. The coherence ratio separates predominantly one-signed curves from curves whose positive and negative regions offset.
	
	\paragraph{Functional inference and finite-sample behavior.}
	Even under a zero population curve, estimation noise produces positive \(IAE\), \(ISE\), and \(SUP\). The finite-grid HAC-Gaussian procedure accounts for this by estimating the joint long-run covariance matrix of the bridge-alpha vector and applying the functionals to simulated zero-curve draws, so the resulting \(p\)-values ask whether the observed curve is too large or locally extreme to be attributed to sampling variation.
	
	The size-and-power calibration supports this interpretation. Rejection is close to the nominal 5\% level when the injected alpha curve is zero and rises monotonically as the model-specific curve is introduced, and HAC-GP and independently calibrated residual-block critical values produce similar power profiles. The rank-area portfolio separately supplies a conventional HAC-tested directional alpha, while its small difference from grid-integrated \(SA\) measures whole-stock boundary quantization.
	
	\paragraph{Relation to standard model-comparison criteria.}
	The diagnostic complements joint alpha tests, Hansen--Jagannathan distance, spanning tests, maximum-Sharpe comparisons, and size-portfolio tests. These procedures evaluate pricing errors or mean--variance performance in their chosen asset spaces; the cap-axis diagnostic instead fixes one market-internal ordering and estimates the shape of the model's zero-alpha violation along that coordinate.
	
	A factor can expand the attainable frontier while leaving a nonzero cap-axis curve; conversely, a model can leave a small cap-axis footprint without being the strongest mean--variance model or pricing other anomaly directions. Relative to coarse size portfolios, the gain is localization: nested whole-stock prefixes reveal where model-relative differences accumulate rather than averaging them within a few bins.
	
	\paragraph{Frequency and horizon dependence.}
	q5 leaves a large negative and nearly one-signed bridge at the daily frequency, but the curve attenuates rapidly when factor leads and lags are included and is small monthly, and under the monthly HAC-GP null its \(IAE\), \(ISE\), and \(SUP\) do not reject. The evidence is therefore more consistent with a short-horizon cap-rank imbalance---potentially involving nonsynchronous trading or factor-frequency implementation---than with an unconditional failure of q5.
	
	The Fama--French and Carhart models display the opposite profile. Their positive bridges are weaker daily but become pronounced and nearly one-signed monthly, and lead--lag correction does not absorb them in the same way. The diagnostic therefore identifies where and at which horizon each model leaves cap-axis pricing error rather than producing a stable global ranking.
	
	\paragraph{The factor-coordinate interpretation.}
	The 155-factor scan provides a cross-sectional version of the same distinction. Cap-axis magnitude is not a monotone transformation of maximum-Sharpe gain: some factors improve the frontier while leaving small footprints, whereas others combine large Sharpe gains with large bridge errors, q5 expected growth being the clearest upper-tail example.
	
	The magnitude norms also do not merely reproduce conventional size exposure. FF3 SMB loadings explain little of the cross-sectional variation in \(IAE\), \(ISE\), or \(SUP\), and size-residualization leaves their relation with \(\Delta SR\) essentially unchanged; signed area differs because it preserves directional size tilt. Thus \(\Delta SR\) measures what a factor adds to the frontier, while the cap-axis norms measure the zero-alpha footprint remaining on a fixed bridge family.
	
	\paragraph{Scope and limitations.}
	A statistical nonrejection does not prove that the population curve is zero, and a zero bridge curve does not establish full SDF validity. Pricing errors may remain along value, momentum, profitability, investment, industry, or idiosyncratic directions outside the cap-axis bridge space; conversely, rejection on this axis does not imply that the model lacks mean--variance value elsewhere.
	
	The curve can nevertheless support further applications: its norms could serve as objectives for factor construction, and differences between curves could identify which factor blocks move particular cap-rank regions. The contribution here is the measurement device: aggregate-market fit, maximum-Sharpe improvement, and cap-axis zero-alpha consistency are empirically distinct criteria.
	
	\section{Conclusion}
	\label{sec:conclusion}
	
	This paper develops a cap-axis zero-alpha diagnostic for factor-model evaluation. Each whole-stock capitalization prefix is paired with an equal realized exposure to the aggregate market, producing a closed bridge-alpha curve. Signed area measures direction, while \(IAE\), \(ISE\), and \(SUP\) measure global and local magnitude. Finite-grid HAC-Gaussian inference delivers the functional tests, and residual-block calibration shows near-nominal size and increasing power against structured alternatives.
	
	In the 1967--2024 CRSP market, q5 leaves a one-sided negative daily bridge that attenuates under lead--lag correction and is small monthly, while the Fama--French and Carhart bridges are weaker daily but become positive and nearly one-signed monthly. Cap-rank pricing errors are therefore model-specific and horizon-dependent rather than summarized by a universal ranking.
	
	Across 155 monthly factors, cap-axis magnitude is neither a monotone transformation of maximum-Sharpe gain nor substantially explained by loading on FF3 SMB. Aggregate-market fit, mean--variance improvement, conventional size exposure, and cap-axis zero-alpha consistency are distinct empirical dimensions.
	
	A zero bridge-alpha curve establishes zero-alpha consistency on the implemented bridge family and, when the aggregate-market alpha is also zero, on the finite span generated by the market and the bridges. It does not establish full SDF validity or imply zero alpha for every raw size portfolio. The diagnostic instead makes the direction, magnitude, location, and horizon of market-internal pricing errors observable.
	
	\paragraph{Funding}
	This research did not receive any specific grant from funding agencies in the public, commercial, or not-for-profit sectors.
	
	\paragraph{Declaration of AI usage} 
	During the preparation of this manuscript, the author used ChatGPT (OpenAI) and Claude (Anthropic) for language refinement and structural clarity. All outputs were reviewed and edited by the author, who takes full responsibility for the content.
	
	\paragraph{Declaration of interest}
	The author declares no competing interests.
	
	\newpage
	\singlespacing

\end{document}